\documentclass[conference]{IEEEtran}
\IEEEoverridecommandlockouts
\usepackage{amsthm} 
\newtheorem{theorem}{Theorem}

\newtheorem{lemma}{Lemma}

\newtheorem{assumption}{Assumption}
\usepackage{algorithm}
\usepackage{algpseudocode} 
\usepackage{amsmath,amsfonts}
\allowdisplaybreaks[4]
\usepackage{array}
\usepackage[caption=false,font=normalsize,labelfont=sf,textfont=sf]{subfig}
\usepackage{textcomp}
\usepackage{stfloats}
\usepackage{url}
\usepackage{verbatim}
\usepackage{graphicx}
\usepackage{cite}
\usepackage{amssymb}
\usepackage{nomencl}
\makenomenclature
\usepackage{etoolbox}
\usepackage{url}
\usepackage{booktabs} 
\usepackage{xcolor}

\def\BibTeX{{\rm B\kern-.05em{\sc i\kern-.025em b}\kern-.08em
    T\kern-.1667em\lower.7ex\hbox{E}\kern-.125emX}}
\begin{document}

\title{A mixed $\mathcal{H}_{\infty}$--Passivity approach for Leveraging District Heating Systems as Frequency Ancillary Service in Electric Power Systems
}

\author{Xinyi Yi and Ioannis Lestas%
    \thanks{X. Yi and I. Lestas are with the Department of Engineering, University of Cambridge, Trumpington Street, Cambridge, CB2 1PZ, United Kingdom. Emails:
        {\tt\small <xy343, icl20>@cam.ac.uk}.}
    \thanks{Accepted by 2026 IEEE Power \& Energy Society General Meeting}%
}

\maketitle

\begin{abstract}
This paper introduces a mixed $\mathcal{H}_{\infty}$–passivity framework that enables district heating systems (DHSs) with heat pumps to support electric-grid frequency regulation. The analysis in the paper illustrates how the DHS regulator influences coupled electro–thermal frequency dynamics and provides LMI conditions for efficient controller design. A disturbance-independent temperature regulator is also presented that ensures stability and robustness against heat-demand uncertainty. Simulations demonstrate improved  frequency control dynamics in the electrical power grid, while maintaining good thermal performance in the DHS.
\end{abstract}

\section{Introduction}
As decarbonization accelerates, electrified heating is becoming central to low-carbon energy systems. Heat pumps (HPs) enable efficient electricity-to-heat conversion and significantly tighten the coupling between electric power systems (EPSs) and district heating systems (DHSs). Large-scale initiatives including the EU Heat Pump Accelerator~\cite{heatpumpaccel} and China’s National Action Plan~\cite{chinaheatpump} highlight the rapid growth of HP deployment. At the same time, the expansion of large data centers increases the need for fast and reliable ancillary services, intensifying the demand for flexible resources. Leveraging their inherent thermal storage capacity, DHSs present strong potential to contribute to EPS frequency support.

Frequency regulation in EPSs is hierarchical: primary control acts within seconds, secondary restores nominal frequency within minutes, and tertiary optimizes long-term operation. The thermal inertia of DHSs fits the secondary-control time scale. DHSs typically use optimal setpoint scheduling and tracking, where energy-sharing optimization yields cost-efficient setpoints and controllers follow them. Yet uncertain thermal demand causes setpoint deviations. Our prior work~\cite{accconfer} proposed an  LQR temperature regulator achieving optimal power sharing within a heating system without having a prior knowledge of the disturbances. The integration of such involved control schemes within a power system,  while ensuring good performance in the combined  electrothermal network is though a non-trivial problem.

Recent studies have examined coordinated frequency– temperature control in combined heat and power (CHP) systems. \cite{krishna2021port} introduced a Port-Hamiltonian framework, but it depends on disturbance-based temperature setpoints that are hard to specify under renewable variability and complex demand patterns. \cite{qin2024frequency} proposed a distributed primary-frequency controller with a disturbance-independent average-temperature based regulator, while \cite{ref:article5} used robust MPC with data-driven disturbance forecasts to provide frequency support under building-scale thermal constraints. However, prior work does not assess how the the DHS temperature-regulator design affects closed-loop electro–thermal dynamic interactions and performance  
when HPs contribute to frequency regulation. This study addresses this gap using a mixed $\mathcal{H}_{\infty}$–passivity framework that links DHS regulator structure to CHP frequency control performance.

In particular, a mixed $\mathcal{H}_{\infty}$--passivity method is proposed that ensures stability and enhances performance, while achieving optimal energy sharing in the DHS. Furthermore, it yields tractable LMI conditions for the DHS temperature controller design. The key contributions of this paper are:  
1) A DHS regulator is proposed that converges to the optimal electro--thermal operating point under uncertain demand;  
2) A mixed $\mathcal{H}_{\infty}$--passivity framework is formulated linking DHS regulator structure to CHP frequency-control dynamic behaviour, enabling more efficient frequency support;  
3) Analytical convergence guarantees are provided for the CHP system considered. The paper is structured as follows. Section~II formulates the DHS--EPS model, Section~III develops and analyzes the control scheme,  
and Section~IV presents simulation case studies.

\section{System Model Formulation}
In the paper, bold symbols denote stacked vectors, and when clear from context, matrices. Non-bold symbols represent scalar quantities. All proofs are provided in the Appendix to preserve technical clarity and comply with page limits.

\subsection{Heat pump}
Let $H_e$ denote the set of buses in the EPS where HPs are connected, $H_h$ the set of DHS edges associated with HPs, and $H_{HP}$ the set of HP units.  
For frequency regulation, each HP operates near an equilibrium point with approximately constant coefficient of performance (CoP), so that its electric power consumption $p^{H}_{k}$, heat output $h^{P}_{k}$, bus frequency deviation $\omega_i$ and DHS signal $Sig^H$ satisfy\cite{qin2024frequency,krishna2021port,ref:article3}
:
\begin{subequations}\label{firstorder}
\begin{align}
    h^{P}_{j_k} = &CoP_k \cdot p^{H}_{i_k}, j_k \in H_h, i_k \in H_e,\label{heatpumpcop}\\
p_k^H=&\gamma_{k}^E \omega_k+ \gamma_{k}^H Sig^H , k \in H_{HP},\label{heatpumpcopcal}
\end{align}
\end{subequations}
where $\gamma_{k}^E$ and $\gamma_{k}^H$ denote the electrical and thermal operation coefficients of HP $k$, respectively.
\subsection{Electric power system}
\subsubsection{Frequency dynamic model of electric buses}
Two types of buses are considered: $i \in \mathcal{E}^{DG}$, which connect only to distributed generators (DGs), and $i \in \mathcal{E}^{HP}$, which host both DGs and heat pumps (HPs), and $\mathcal{E} = \mathcal{E}^{DG} \cup \mathcal{E}^{HP}$.
Let $P_i$ denote the net active power flowing out of bus $i$. 
The frequency deviation, damping coefficient, controllable generation, control input, electrical disturbance, and inertia at bus $i$ are denoted by $\omega_i$, $D_i$, $P_i^G$, $u_i$, $P_i^L$, and $M_i$, respectively. 
Here, $\omega_i$, $u_i$, $P_i^L$, $P_i$, and $P_i^G$ represent deviations from their nominal steady-state values.
To capture the lag in the response of generation, each controllable unit is modeled with a first-order system:
\begin{equation}\label{gen_dyn}
    T_{g,i}\,\dot P_i^G = -P_i^G + u_i, \qquad i \in \mathcal{E},
\end{equation}
where $T_{g,i}>0$ is the effective turbine--governor time constant.

The bus frequency dynamics are then given by
\begin{subequations}\label{busfre1}
\begin{align}
    M_i \dot{\omega}_i &= -D_i \omega_i - P_i + P_i^G, 
    && i \in \mathcal{E}^{DG}, \label{busfrea}\\[2mm]
    M_i \dot{\omega}_i &= -D_i \omega_i - P_i + P_i^G - p_i^H, 
    && i \in \mathcal{E}^{HP}, \label{hpdhs}
\end{align}
\end{subequations}
where $p_i^H$ is the active power consumed by the HP at bus $i$, computed according to~(\ref{heatpumpcopcal}).
The network power flow and angle dynamics are:
\begin{equation}\label{busfre2}
P_i = \sum_j |B_{ij}| V_i V_j \sin(\theta_i - \theta_j) - P_i^L,
     i \in \mathcal{E}, 
    \dot{\boldsymbol{x_\theta}} = \boldsymbol{R_I \omega}, 
\end{equation}
where $\boldsymbol{x}_\theta$ is the phase-angle deviation vector excluding the reference bus $r$, $\boldsymbol{\omega}$ denotes frequency vector, and $\boldsymbol{R_I = [I_{\,n^{\mathcal{E}}-1},\, -\mathbf{1}_{\,n^{\mathcal{E}}-1}]}$.  $n^{\mathcal{E}}$ denotes the number of electric buses, while $\boldsymbol{I_x}$ and $\boldsymbol{1_x}$ represent the identity matrix and the all-ones vector of dimension $x$, respectively.

\begin{assumption}
For any $\boldsymbol{\bar{P}^L}$ and  $\boldsymbol{\bar{p}^H}$, the EPS~(\ref{gen_dyn}-\ref{busfre2}) admits an equilibrium $(\boldsymbol{x_\theta^*}, \boldsymbol{\theta^*}, \boldsymbol{\omega^*}, \boldsymbol{P^{G*}}, \boldsymbol{P^*},\boldsymbol{u^*})$,  
where $\boldsymbol{\omega^*} = \omega^{com}\boldsymbol{1}_{n^{\mathcal{E}}}$ and $\omega^{com}$ is the synchronized EPS frequency.
\end{assumption}

\subsubsection{Secondary frequency control}
The objective of secondary frequency control is to restore the frequency to
its nominal value, achieved via AGC integral control at bus $r$:
\begin{equation}\label{eq:agc}
    \dot{g} = -\omega_r, \quad
    u_i = -K_i^{P}\,\omega_i + \delta_{ir} K^{I} g ,\quad i\in\mathcal{E},
\end{equation}
where 
$K_i^{P}>0$ are droop
gains, and $K^{I}>0$ is the AGC gain. If $i=r$, $\delta_{ir}=1$, otherwise, $\delta_{ir}=0$.

\begin{lemma}
Under \textbf{Assumption~1}, for any constant disturbances 
$\boldsymbol{\bar{P}^{L}}$ and $\boldsymbol{\bar{p}^{H}}$,  
the EPS model with the AGC controller \eqref{gen_dyn}-\eqref{eq:agc} admits a unique equilibrium  
$(\boldsymbol{x_\theta^*}, \boldsymbol{\theta^*}, \boldsymbol{u^*}, \boldsymbol{\omega^*}, \boldsymbol{P^{G*}}, \boldsymbol{P^*},g^*)$.  
At this equilibrium, $\boldsymbol{\omega^{*}=0}$.
\end{lemma}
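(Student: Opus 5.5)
We set all time derivatives in \eqref{gen_dyn}--\eqref{eq:agc} to zero with the disturbances held at $\boldsymbol{\bar{P}^L}$ and $\boldsymbol{\bar{p}^H}$. We then show, in order, that the frequency must vanish, that the remaining algebraic variables are determined, and that this determination is unique.

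\textbf{Step 1: zero frequency.} At an equilibrium, $\dot g = 0$ in \eqref{eq:agc} gives $\omega_r^*=0$ directly. Next, $\dot{\boldsymbol{x}}_\theta = \boldsymbol{R_I\omega}^* = 0$ gives $\omega_i^*=\omega_r^*$ for all $i\neq r$, which is the synchronization also stated in Assumption~1 ($\boldsymbol{\omega}^*=\omega^{com}\boldsymbol{1}$). Hence $\omega^{com}=0$ and $\boldsymbol{\omega}^*=\boldsymbol{0}$.

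\textbf{Step 2: reduction to a power-balance problem.} From \eqref{gen_dyn} with $\dot P^G_i=0$ we get $P_i^{G*}=u_i^*$. From \eqref{eq:agc} with $\boldsymbol{\omega}^*=0$ we get $u_i^*=\delta_{ir}K^I g^*$. Thus $P_i^{G*}=0$ for $i\neq r$ and $P_r^{G*}=K^I g^*$. The frequency equations \eqref{busfre1} then give $P_i^*=P_i^{G*}-\bar p_i^H$, with $\bar p_i^H$ set to zero on $\mathcal{E}^{DG}$. Because the sine coupling in \eqref{busfre2} is antisymmetric, $\sum_i P_i^* = -\sum_i \bar P_i^L$. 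Summing the balance equations over $i\in\mathcal{E}$ therefore yields
\[
K^I g^* = \sum_{i\in\mathcal{E}^{HP}} \bar p_i^H - \sum_{i\in\mathcal{E}} \bar P_i^L .
\]
Since $K^I>0$, this fixes $g^*$ uniquely, and with it $\boldsymbol{u}^*$, $\boldsymbol{P}^{G*}$ and $\boldsymbol{P}^*$.

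\textbf{Step 3: existence and uniqueness of the angles.} It remains to find $\boldsymbol{x}_\theta^*$, which is the vector of angle differences relative to the reference bus. It must solve the nonlinear power-flow equations $\sum_j |B_{ij}|V_iV_j\sin(\theta_i-\theta_j)=P_i^*+\bar P_i^L$. The right-hand sides sum to zero by Step 2, so the system is consistent.

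\begin{itemize}
\item \emph{Existence.} Assumption~1 guarantees that the EPS admits an equilibrium for the given disturbances. For the power-flow equations here, its role is to supply a solution for the balanced injection vector obtained in Step 2.
\item \emph{Uniqueness.} This is the main obstacle, since sinusoidal power flow can have several solutions. The first approach is to restrict to the standard security region $|\theta_i-\theta_j|<\pi/2$ on every line. On that region the map $\boldsymbol{x}_\theta\mapsto$ (injections at the non-reference buses) has Jacobian $\boldsymbol{R_I}\,\mathcal{L}(\boldsymbol{\theta})\,\boldsymbol{R_I}^\top$-type. Here $\mathcal{L}$ is a weighted Laplacian with positive weights $|B_{ij}|V_iV_j\cos(\theta_i-\theta_j)$. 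On a connected network this reduced Laplacian is positive definite, so the map is strictly monotone on the convex region, which gives injectivity.
\end{itemize}

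Finally, $\boldsymbol{\theta}^*$ is recovered from $\boldsymbol{x}_\theta^*$ relative to the reference angle. This completes uniqueness of the full tuple, with $\boldsymbol{\omega}^*=0$ as claimed.
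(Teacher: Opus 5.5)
Your Step~1 is the paper's entire proof. The paper only notes that $\dot g=0$ forces $\omega_r^*=0$ and then appeals to the synchronization $\boldsymbol{\omega^*}=\omega^{com}\boldsymbol{1}$ in Assumption~1; it never argues existence or uniqueness of the other variables. Your derivation of synchronization from $\dot{\boldsymbol{x}}_\theta=\boldsymbol{R_I}\boldsymbol{\omega}^*=\boldsymbol{0}$ is tighter, because Assumption~1 only asserts that \emph{some} equilibrium is synchronized.

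Your Step~2 fills in what the paper leaves implicit, and it does so without any restriction on the angles. Summing the balance equations, with antisymmetry of the line flows and $K^I>0$, pins down $g^*$ and hence $\boldsymbol{u^*}$, $\boldsymbol{P^{G*}}$ and $\boldsymbol{P^*}$.

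Step~3 shows that the lemma, read literally, is too strong. The angle components are not unique unless you restrict to $|\theta_i-\theta_j|<\pi/2$ and fix the reference angle. Otherwise there are $2\pi$ shifts, the $\pi-\arcsin$ branch already appears on a single line, and $\theta_r^*$ is an arbitrary constant. So you are proving the correctly qualified, standard version. To make it airtight, state explicitly that you read Assumption~1 as supplying a power-flow solution \emph{inside} the security region for the specific injection pattern $P_i^{G*}=\delta_{ir}K^Ig^*$. As written, Assumption~1 concerns the EPS without the AGC law and says nothing about the region. Existence outside the region plus uniqueness inside it does not give a unique equilibrium.

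Two small points. First, connectivity of the network should be assumed explicitly. Second, the Jacobian of $\boldsymbol{x}_\theta\mapsto$ (line-flow injections at non-reference buses) is the Laplacian with row and column $r$ deleted, not literally $\boldsymbol{R_I}\mathcal{L}\boldsymbol{R_I}^\top$. The two are congruent via $\boldsymbol{I}+\boldsymbol{1}\boldsymbol{1}^\top$, so your positive-definiteness and monotonicity conclusion still holds.
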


\begin{assumption}
For all $i\in\mathcal{E}$ the damping coefficients satisfy $D_i \ge D_{\min}>0$.  
The closed–loop generation dynamics (\ref{gen_dyn},\ref{eq:agc}) at each bus are
of first order with time constant $T_{g,i}>0$.  
The transfer function from $-\omega_i$ to the generated power $P_i^G$ is
\begin{equation}\label{eq:G_i_def}
   G_i(s)
   := 
   D_i
   \;+\;
   \frac{K_i^{P}}{T_{g,i}s+1}
   \;+\;
   \delta_{ir}\,\frac{K^{I}}{s\,(T_{g,i}s+1)} ,
\end{equation}
where the last term is present only at the reference bus $r$.
Each $G_i(s)$ is strictly positive real, i.e., there exists $\rho_i>0$
such that $\Re\,G_i(j\omega)\ge \rho_i$ for all $\omega\in\mathbb{R}$. \footnote{Note that higher order generation dynamics that satisfy this property can also be considered.}
\end{assumption}

\begin{lemma}
Under \textbf{Assumptions 1-2}, consider the EPS model with the AGC controller \eqref{gen_dyn}-\eqref{eq:agc}.  
With the input $\boldsymbol{u_1 := -p^H}$,
and the output $\boldsymbol{y_1 := \omega^{HP}}$, the EPS is strictly passive from $\boldsymbol{u_1}$ to
$\boldsymbol{y_1}$: there exist a storage function $V_{\mathrm{e}}$ 
and a constant $\rho_e>0$ such that along all trajectories with the squared Euclidean norm $\|\boldsymbol{y_1}\|^2 := \boldsymbol{y_1^\top y_1}$:
\begin{equation}\label{eq:eps_passive_ineq}
\dot V_{\mathrm{e}}
    \;\le\; 
    \boldsymbol{y_1^\top u_1} - \rho_e\,\|\boldsymbol{y_1}\|^2.
\end{equation}
\end{lemma}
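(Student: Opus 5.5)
The plan is to build $V_{\mathrm e}$ as a sum of three pieces: a kinetic term for the bus frequencies, a potential term for the network (line flows), and storage functions for the generation/AGC subsystems. The SPR property in Assumption~2 is what supplies the strict dissipation. Everything is written in incremental form about the equilibrium of Lemma~1, where $\boldsymbol{\omega^*}=0$. Hence $\boldsymbol{y_1}$ and the frequency variables coincide with their deviations, and the constant disturbances cancel.

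\emph{Step 1: kinetic and network terms.} Take
\[
V_1=\tfrac12\sum_{i\in\mathcal E}M_i\omega_i^2 .
\]
For the network, take the standard Bregman-type function
\[
V_2=\sum_{(i,j)}|B_{ij}|V_iV_j\bigl(-\cos\theta_{ij}+\cos\theta^*_{ij}-\sin\theta^*_{ij}(\theta_{ij}-\theta^*_{ij})\bigr),
\]
written in the coordinates $\boldsymbol{x_\theta}$ with $\dot{\boldsymbol{x_\theta}}=\boldsymbol{R_I\omega}$. Its derivative along (\ref{busfre2}) is $\sum_i\omega_i(P_i-P_i^*)$. This cancels the $-P_i$ term in $\dot V_1$. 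We get
\[
\dot V_1+\dot V_2=\sum_i\omega_i\bigl(-D_i\omega_i+(P_i^G-P_i^{G*})\bigr)+\sum_{i\in\mathcal E^{HP}}\omega_i u_{1,i},
\]
where $u_{1,i}=-(p_i^H-\bar p_i^H)$. The last sum is exactly $\boldsymbol{y_1^\top u_1}$. Local positive definiteness of $V_2$ requires the usual security condition $|\theta^*_{ij}|<\pi/2$. I will state this as implicit in Assumption~1.

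\emph{Step 2: generation dynamics.} By Assumption~2, the map $-\omega_i\mapsto P_i^G$ together with the damping, i.e.\ $G_i(s)$, is SPR. For the non-reference buses, $G_i-D_i=K_i^P/(T_{g,i}s+1)$ is itself positive real with a state realization of order one. At bus $r$ the realization is of order two, through the integrator $g$. The KYP lemma then gives a quadratic storage $W_i(x_i-x_i^*)$ such that
\[
\dot W_i\le -\omega_i(P_i^G-P_i^{G*})-D_i\omega_i^2+\rho_i\omega_i^2 ,
\]
or, more simply, $\dot W_i\le-\omega_i(P_i^G-P_i^{G*})-\rho_i\omega_i^2$ after absorbing $D_i\omega_i^2$. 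For the first-order buses this can be done explicitly with $W_i=\tfrac{T_{g,i}}{2K_i^P}(P_i^G-P_i^{G*})^2$. For bus $r$ I would use the KYP lemma. Alternatively, I would add an explicit cross term in $(g-g^*)$ and $(P_r^G-P_r^{G*})$, chosen so that the integrator pole is handled as a lossless part.

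Setting $V_{\mathrm e}=V_1+V_2+\sum_iW_i$ gives
\[
\dot V_{\mathrm e}\le \boldsymbol{y_1^\top u_1}-\sum_i\rho_i\omega_i^2\le\boldsymbol{y_1^\top u_1}-\rho_e\|\boldsymbol{y_1}\|^2 ,
\]
with $\rho_e=\min_i\rho_i\ge D_{\min}$-type constant, since $\boldsymbol{y_1}$ is a subvector of $\boldsymbol\omega$.

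\emph{Main obstacle.} The delicate point is the reference-bus term, which has a pole at $s=0$. SPR in the sense $\Re G_r(j\omega)\ge\rho_r$ is stated for $\omega\ne0$, so the KYP lemma must be applied in its version for PR functions with a simple imaginary-axis pole. I need to check that the $j\omega$-residue is nonnegative, and that the strictness comes from the $D_r$ part rather than from the dynamic part. My first attempt is to split $G_r=D_r+\tilde G_r$ with $\tilde G_r$ merely PR, giving $\dot W_r\le-\omega_r(P_r^G-P_r^{G*})$. However, $\Re\,\tilde G_r(j\omega)=\bigl(K^P_r-K^IT_{g,r}\bigr)/(1+T_{g,r}^2\omega^2)$ can be negative. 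In that case I would take $\rho_r=D_r-\max(0,K^IT_{g,r}-K_r^P)>0$, as guaranteed by Assumption~2, and construct the storage accordingly.
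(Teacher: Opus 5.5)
Your proposal is correct and uses the same argument the paper relies on. The paper's proof of Lemma~2 only invokes ``classical passivity results for swing equations with strictly positive real generation dynamics'', and your kinetic-energy plus Bregman-potential plus KYP-storage construction is that result written out. It also supplies details the paper omits: the local condition $|\theta^*_{ij}|<\pi/2$ needed for $V_2\ge 0$, and the fact that at bus $r$ the strictness must come from $D_r$. That last point follows from the PR lemma for $\tilde G_r+\max(0,K^IT_{g,r}-K_r^P)$, whose only imaginary-axis pole is the simple pole at $s=0$ with residue $K^I>0$.

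One correction is needed in Step~2. The bound should read $\dot W_i\le-\omega_i(P_i^G-P_i^{G*})+(D_i-\rho_i)\omega_i^2$; your display has the signs of $D_i$ and $\rho_i$ flipped. The ``simpler'' form with $-\rho_i\omega_i^2$ holds only for the full block $G_i$ with $D_i$ moved inside, never for the strictly proper generation block alone. This corrected form is exactly what your final estimate $\dot V_{\mathrm e}\le y_1^\top u_1-\sum_i\rho_i\omega_i^2$ uses.
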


\subsection{District heating system}
\subsubsection{Temperature dynamics}
We model the DHS temperature dynamics following~\cite{qin2024frequency}.  
Let $\mathcal{H}^E$ denote the set of edges (heat exchangers and pipelines) and $\mathcal{H}^N$ the set of nodes (storage tanks).  
The edge and node dynamics are given as follows:
\begin{subequations}\label{temdy}
\begin{align}
\rho C_p V_j^E \dot{T}_j^E =& \rho C_p q_j^E (T_k^N - T_j^E) + h_j^G + h_j^P - h_j^L,\notag\\
&j \in \mathcal{H}^E,\; k \in \mathcal{H}^N, \label{eq:2a} \\
\rho C_p V_k^N \dot{T}_k^N =& \sum_{j \in \mathcal{T}_k} \rho C_p q_j^E (T_j^E - T_k^N), 
k \in \mathcal{H}^N, \label{eq:2b}
\end{align}
\end{subequations}
where $T_j^E$ and $T_k^N$ denote the outlet temperatures of edge $j$ and node $k$, respectively.  
$h_j^P$, $h_j^G$, and $h_j^L$ represent the heat contributions from the HP, the conventional heat source, and the thermal load at edge $j$.  
The parameters $V_j^E$ and $V_k^N$ are the volumes of edges and nodes, $q_j^E$ is the mass flow along edge $j$, and $C_p$ and $\rho$ denote the specific heat capacity and density of water.  
The HP heat injection $h_j^P$ corresponds to $h_{j_k}^P$ in~\eqref{heatpumpcop}.  
If edge $j$ hosts an HP, then $h_j^G=h_j^L=0$;  
for a conventional heat source, $h_j^P=h_j^L=0$;  
and for a load, $h_j^G=h_j^P=0$ with fixed $h_j^L$.
Equations~\eqref{eq:2a}–\eqref{eq:2b} can be compactly expressed in matrix form using the  Kirchhoff matrix $\boldsymbol{A_h}$ \cite{qin2024frequency}.
\small{\begin{subequations}
\begin{align}
&\boldsymbol{V}
\begin{bmatrix}
\dot{\boldsymbol{T}}^G \\
\dot{\boldsymbol{T}}^{HP}\\
\dot{\boldsymbol{T}}^{L}\\
\dot{\boldsymbol{T}}^N
\end{bmatrix}
= -\boldsymbol{A}_h 
\begin{bmatrix}
\boldsymbol{T}^G \\
\boldsymbol{T}^{HP}\\
\boldsymbol{T}^{L}\\
\boldsymbol{T}^N
\end{bmatrix}
+ 
\begin{bmatrix}
\boldsymbol{h^G} \\
\boldsymbol{0}\\
\boldsymbol{0} \\
\boldsymbol{0}
\end{bmatrix}
+ 
\begin{bmatrix}
\boldsymbol{0} \\
\boldsymbol{h^{P}}\\
\boldsymbol{0}\\
\boldsymbol{0}
\end{bmatrix}
+ 
\begin{bmatrix}
\boldsymbol{0}\\
\boldsymbol{0}\\
-\boldsymbol{h^{L}}\\
\boldsymbol{0}
\end{bmatrix}\label{eq:3},\\
&\boldsymbol{\dot{T}}=\boldsymbol{-AT+B_1h^G+B_2h^{P}+w^h},\label{origindy}
\end{align}
\end{subequations}}

\noindent where $\boldsymbol{h^G}$, $\boldsymbol{h^{P}}$, and $\boldsymbol{h^L}$ denote the non-pump heat source generation, HP generation, and load vectors, respectively.  
We define
$\boldsymbol{A = \tfrac{A_h}{V}}, 
\boldsymbol{B_1 = \tfrac{I}{V}\begin{bmatrix} I \\ 0 \\ 0 \\ 0 \end{bmatrix}}, 
\boldsymbol{B_2 = \tfrac{I}{V}\begin{bmatrix} 0 \\ I \\ 0 \\ 0 \end{bmatrix}}, 
\boldsymbol{w^h = \tfrac{I}{V}\begin{bmatrix} 0 \\ 0 \\ -h^L \\ 0 \end{bmatrix}}$.

\subsubsection{Energy-sharing steady-state objectives}
The steady-state economic dispatch for DHSs under given $\boldsymbol{\bar{h}^{P}}$ and $\boldsymbol{\bar{w}^h}$ is modeled as two optimization problems.
\small{\begin{subequations}\label{opt1}
\begin{align}
\textbf{E1:} &\ \min_{\boldsymbol{h^G}\in \mathbb{R}, \boldsymbol{T}\in \mathbb{R}} \Phi_1 = \frac{1}{2} \boldsymbol{{h^G}^{\top}} \boldsymbol{F^G} \boldsymbol{h^G},\\
    &\ \text{s.t.} \ \boldsymbol{A T} =\boldsymbol{B_1 h^G}+\boldsymbol{B_2 \bar{h}^{P}+\bar{w}^h},\label{5b}
\end{align}
\end{subequations}}

\noindent where $\Phi_1$ denotes the non-pump sources' heat generation cost, with 
\(\boldsymbol{F^G} = \mathrm{diag}\{ f_i^G \}\), where \(f_i^G > 0\) is the cost coefficient of source~\(i\).  
The temperature deviation cost is minimized in \textbf{E2}, formulated as:
\small{\begin{subequations}\label{economic2tran}
\begin{align}
&\textbf{E2:} \min_{z \in \mathbb{R}, \boldsymbol{T \in \mathbb{R}}} \Phi_2=\frac{1}{2} \boldsymbol{T}^\top \boldsymbol{F^D} \boldsymbol{T},\\
&s.t. \boldsymbol{T} =\boldsymbol{A^\dagger} ( \boldsymbol{{B_1h^G}^*}+\boldsymbol{B_2\bar{h}^{P}}+\boldsymbol{\bar{w}^h}) + \alpha\boldsymbol{1},\label{11b}
\end{align}
\end{subequations}}

\noindent where $\boldsymbol{A^\dagger}$ denotes the Moore--Penrose pseudoinverse of $\boldsymbol{A}$, and 
$\boldsymbol{F^D} = \mathrm{diag}\{F^D_i\}$, with $F^D_i> 0$, is the temperature deviation penalty coefficient associated with node or edge~$i$.

\begin{lemma}
\textbf{(Optimality condition for \textbf{E1} and \textbf{E2}\cite{accconfer})}
If the DHS (\ref{origindy}) achieves equilibrium at $\boldsymbol{{T^*}}$ and $\boldsymbol{{h^G}^*}$, and satisfies $\boldsymbol{F^M h^{G*}} = \boldsymbol{0}$ and $\boldsymbol{1^\top F^D {T^*}} = 0$, then it uniquely\footnote{Because $\Phi_1$ and $\Phi_2$ are strictly convex
($\boldsymbol F^D\succ 0$, $\boldsymbol F^G\succ 0$) under linear constraints,
the optimizer $(\boldsymbol T^*,\boldsymbol h^{G*})$ of \textbf{E1}--\textbf{E2}
is unique.} solves the optimization problems \textbf{E1} and \textbf{E2}, where $\boldsymbol{F^M}$
is defined by the following matrix: 
\small{\begin{equation}
\begin{bmatrix}
F^G(1,1) & -F^G(2,2) & 0 & \cdots & 0\\
0 & F^G(2,2) & -F^G(3,3) & \cdots & 0\\
\vdots & \vdots & \vdots & \ddots & \vdots\\
\end{bmatrix}.
\end{equation}}
\end{lemma}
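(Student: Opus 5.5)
The plan is to verify the KKT conditions of \textbf{E1} and \textbf{E2} directly at the equilibrium $(\boldsymbol{T^*},\boldsymbol{h^{G*}})$, and then to invoke strict convexity for uniqueness.

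\emph{Step 1: \textbf{E1}.} We first characterize feasibility. Since $\boldsymbol{A_h}$ is a Kirchhoff (Laplacian-like) matrix of a connected, balanced hydraulic network, $\boldsymbol{1}^\top\boldsymbol{A_h}=\boldsymbol{0}$. Hence $\boldsymbol{1}^\top\boldsymbol{V}\boldsymbol{A}=\boldsymbol{0}$, and $\mathrm{range}(\boldsymbol{A})$ is the orthogonal complement of $\boldsymbol{V1}$. Constraint \eqref{5b} is therefore solvable in $\boldsymbol{T}$ if and only if the heat balance $\boldsymbol{1}^\top(\boldsymbol{h^G}+\boldsymbol{\bar h^P})-\boldsymbol{1}^\top\boldsymbol{\bar h^L}=0$ holds, which is a single affine constraint on $\boldsymbol{h^G}$. \textbf{E1} thus reduces to minimizing $\tfrac12\boldsymbol{h^{G\top}F^Gh^G}$ subject to $\boldsymbol{1}^\top\boldsymbol{h^G}=c$. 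Its KKT condition is $\boldsymbol{F^Gh^G}=\lambda\boldsymbol{1}$, i.e.\ equal marginal costs $f_i^Gh_i^G$ across all sources. This is exactly $\boldsymbol{F^Mh^{G}}=\boldsymbol{0}$, because the rows of $\boldsymbol{F^M}$ take successive differences of $f_i^Gh_i^G$. At equilibrium of \eqref{origindy}, $\boldsymbol{AT^*}=\boldsymbol{B_1h^{G*}}+\boldsymbol{B_2\bar h^P}+\boldsymbol{\bar w^h}$, so $(\boldsymbol{T^*},\boldsymbol{h^{G*}})$ is feasible. Together with $\boldsymbol{F^Mh^{G*}}=\boldsymbol{0}$, it satisfies the KKT conditions and is therefore optimal.

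\emph{Step 2: \textbf{E2}.} With $\boldsymbol{h^{G*}}$ fixed, the feasible set \eqref{11b} is the line $\boldsymbol{T}=\boldsymbol{T_0}+\alpha\boldsymbol{1}$, where $\boldsymbol{T_0}=\boldsymbol{A^\dagger}(\cdots)$. Here I use that $\ker\boldsymbol{A}=\mathrm{span}\{\boldsymbol{1}\}$, since the rows of $\boldsymbol{A_h}$ sum to zero in the temperature-difference form of \eqref{temdy}, and that all equilibrium temperatures are of the form $\boldsymbol{T_0}+\alpha\boldsymbol{1}$. I need to check that $\boldsymbol{A}\boldsymbol{A^\dagger}$ acts as the identity on the (feasible) right-hand side, which holds because that right-hand side lies in $\mathrm{range}(\boldsymbol{A})$ by Step 1. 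The scalar problem $\min_\alpha \tfrac12(\boldsymbol{T_0}+\alpha\boldsymbol{1})^\top\boldsymbol{F^D}(\boldsymbol{T_0}+\alpha\boldsymbol{1})$ is strictly convex, because $\boldsymbol{1^\top F^D1}>0$. Its stationarity condition is $\boldsymbol{1}^\top\boldsymbol{F^D T}=0$. Since $\boldsymbol{T^*}$ is an equilibrium, it lies on this line, and the assumed condition $\boldsymbol{1^\top F^DT^*}=0$ gives optimality.

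\emph{Step 3: uniqueness.} Both objectives are strictly convex in the decision variables that matter: $\boldsymbol{F^G}\succ0$ in $\boldsymbol{h^G}$, and strict convexity in $\alpha$ for \textbf{E2}. The feasible sets are affine. Hence the minimizers $\boldsymbol{h^{G*}}$ and $\alpha^*$, and thus $\boldsymbol{T^*}$, are unique, as noted in the footnote.

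The main obstacle is the structural step: establishing $\ker\boldsymbol{A}=\mathrm{span}\{\boldsymbol{1}\}$ and the left null vector $\boldsymbol{V1}$ from the Kirchhoff structure, assuming a connected network with positive flows. This is what makes \eqref{5b} collapse to a single energy-balance constraint and \eqref{11b} parametrize all equilibria. I would first verify it from \eqref{eq:2a}--\eqref{eq:2b}, where every term is a temperature difference and flows are conserved at the nodes.
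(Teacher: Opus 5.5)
Your proposal is correct. The paper itself contains no proof of Lemma~3: it imports the result from \cite{accconfer} and only adds the footnote that strict convexity of $\Phi_1,\Phi_2$ under linear constraints gives uniqueness. Your direct KKT verification is therefore a self-contained argument, not a reconstruction of one in the text, and it buys three things the citation-plus-footnote does not.

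First, it makes explicit two structural facts the paper uses silently. One is $\mathbf{1}^\top\boldsymbol{A_h}=\boldsymbol 0$, which follows from mass conservation. It supplies the multiplier $\boldsymbol\mu=\lambda\boldsymbol V\mathbf 1$ for \textbf{E1}: it satisfies $\boldsymbol A^\top\boldsymbol\mu=\boldsymbol 0$ and $\boldsymbol{B_1}^\top\boldsymbol\mu=\lambda\mathbf 1=\boldsymbol{F^Gh^{G*}}$. The other is $\ker\boldsymbol A=\mathrm{span}\{\mathbf 1\}$, which follows from connectivity. It justifies the $\boldsymbol{A^\dagger}(\cdot)+\alpha\mathbf 1$ parametrization in \eqref{11b} and guarantees that $\boldsymbol{T^*}$ actually lies on that line.

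Second, it separates where each fact is needed. For the sufficiency direction in \textbf{E1}, exhibiting that multiplier is enough. So your ``iff'' feasibility characterization in Step~1, which relies on $\mathrm{rank}\,\boldsymbol A=n-1$, is stronger than necessary. Connectivity is genuinely used only in Step~2.

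Third, your Step~3 is more accurate than the paper's footnote. $\Phi_1$ is not strictly convex in $\boldsymbol T$, so \textbf{E1} alone determines $\boldsymbol T$ only up to $\ker\boldsymbol A$. Uniqueness of $\boldsymbol{T^*}$ comes from strict convexity of \textbf{E2} in $\alpha$, exactly as you argue.
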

The error is defined as:
\small{\begin{equation}\label{errorde}
\boldsymbol{e} = \boldsymbol{C}\boldsymbol{T} + \boldsymbol{D}\boldsymbol{h^G}=\begin{bmatrix}
  \boldsymbol{e^G}\\Sig^H
\end{bmatrix},
Sig^H=\boldsymbol{s_Te}, .
\end{equation}}

\noindent where $\boldsymbol{C} = \begin{bmatrix} \boldsymbol{0} \\ \boldsymbol{1^\top F^D} \end{bmatrix}$, $\boldsymbol{D} = \begin{bmatrix} \boldsymbol{F^M} \\ 0 \end{bmatrix}$, and $\boldsymbol{s_T}=\begin{bmatrix}\boldsymbol 0^\top& 1\end{bmatrix}$

\subsubsection{Augmented DHS} Substituting (\ref{firstorder}) into (\ref{origindy}):
\begin{subequations}\label{originaldhs}
\begin{align}
    \boldsymbol{\dot{T}}
    =&\boldsymbol{(-A+B_sC)T+(B_1+B_sD)h^G}\notag\\
    &\boldsymbol{+B_2CoP\gamma^E w^{HP}} \boldsymbol{+w^h}\label{originaldhsd}\\
    =&\boldsymbol{A_e T+B_e h^G+B_\omega \omega^{HP}+w^h,}
\end{align}
\end{subequations}

\noindent where $\boldsymbol{CoP=}diag(CoP_i)$, $\boldsymbol{B_\omega=B_2CoP\gamma^E}$, and $\boldsymbol{B_s=B_2CoP\gamma^H 1_{n^{HP}}s_T}$. Integrating $\dot{\boldsymbol \xi}=\boldsymbol e$ obtains:
\small{\begin{equation}\label{augmentdhs}
    \begin{bmatrix}
        \boldsymbol{\dot{T}}\\
        \boldsymbol{\dot{\xi}}
    \end{bmatrix}=\underbrace{\begin{bmatrix}
        \boldsymbol{A_e}&\boldsymbol{0}\\
        \boldsymbol{C}&\boldsymbol{0}
    \end{bmatrix}}_{\triangleq \boldsymbol{A_\textit{Aug}}} \begin{bmatrix}
        \boldsymbol{T}\\
        \boldsymbol{\xi}
    \end{bmatrix}+\underbrace{\begin{bmatrix}
        \boldsymbol{B_e}\\\boldsymbol{D}
    \end{bmatrix}}_{\triangleq \boldsymbol{B_\textit{Aug}}}\boldsymbol{h^G+}\underbrace{\begin{bmatrix}\boldsymbol B_\omega\\ \boldsymbol 0\end{bmatrix}}_{\triangleq \boldsymbol{B_\textit{cl}^{(\omega)}}}\boldsymbol{\omega^{HP}}+\underbrace{\begin{bmatrix}\boldsymbol I\\ \boldsymbol 0\end{bmatrix}}_{\triangleq \boldsymbol{B_\textit{cl}^{(h)}}}\boldsymbol{w^h}.
\end{equation}}With $\boldsymbol x=\begin{bmatrix}\boldsymbol T^\top & \boldsymbol \xi^\top\end{bmatrix}^\top$, (\ref{augmentdhs}) can be rewritten as:
\begin{equation}\label{augdhs}
    \boldsymbol{\dot{x}=A_\textit{aug}x+B_\textit{aug}h^G+B_\textit{cl}^{(\omega)} \omega^{HP}+B_\textit{cl}^{(h)}w^h}.
\end{equation}
The input is $\boldsymbol{u_2=\omega^{HP}}$ and the output is defined as $\boldsymbol{y_2=p^H}$:
\small{\begin{equation}\label{errorde2}
\begin{aligned}
    \boldsymbol{p^H}=&\boldsymbol{\gamma^E\omega^{HP}}+\boldsymbol{\gamma^H}Sig^H\boldsymbol{1_{n^{HP}}}\\
    =&\boldsymbol{\gamma^E \omega^{HP}}+\underbrace{\boldsymbol{\gamma^H}\begin{bmatrix}
     \boldsymbol{1_{n^{HP}}s_TC}&\boldsymbol{0}   
    \end{bmatrix}}_{\triangleq \boldsymbol{S_C}}\boldsymbol{x}\\
    &-\underbrace{\boldsymbol{\gamma^H 1_{n^{HP}}s_T D}}_{\triangleq\boldsymbol{S_D}} \boldsymbol{Kx}\\
    =&\boldsymbol{\gamma^E \omega^{HP}+}C_y(\boldsymbol{K}) \boldsymbol{x}.
\end{aligned}
\end{equation}}

\begin{lemma}
Consider the augmented DHS dynamics in~\eqref{augmentdhs} with the temperature regulator $\boldsymbol{h^G} = -\boldsymbol{K_T T} - \boldsymbol{K_I \xi}$. The resulting closed-loop temperature dynamics are
\begin{equation}\label{closeloopdhs}
\dot{\boldsymbol{x}}
=
\boldsymbol{A}_{\rm cl}\,\boldsymbol{x}
+\boldsymbol{B}_{\rm cl}^{(\omega)}\,\boldsymbol{\omega^{HP}}
+\boldsymbol{B}_{\rm cl}^{(h)}\,\boldsymbol{w^{h}},
\end{equation}
where $\boldsymbol{A}_{\rm cl}=
\begin{bmatrix}
\boldsymbol{A_e}-\boldsymbol{B_e K_T}
&
-\boldsymbol{B_e K_I}
\\[2pt]
\boldsymbol{C}-\boldsymbol{D K_T}
&
-\boldsymbol{D K_I}
\end{bmatrix}$.
If $\boldsymbol{A}_{\rm cl}$ is Hurwitz, the closed-loop DHS admits a unique equilibrium. Moreover, for constant disturbances $\boldsymbol{\bar \omega^{HP}}$ and $\boldsymbol{\bar w^h}$, 
$\lim_{t\to\infty} \boldsymbol{e}(t) = \boldsymbol{0}$.
\end{lemma}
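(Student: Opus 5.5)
The plan has three parts: derive the closed-loop matrix by substitution, obtain the equilibrium from invertibility of a Hurwitz matrix, and read off $\boldsymbol e\to\boldsymbol 0$ from the $\boldsymbol\xi$-rows of the equilibrium equation.

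\textbf{Closed-loop form.} Substitute $\boldsymbol{h^G}=-\boldsymbol{K_T T}-\boldsymbol{K_I\xi}=-\boldsymbol K\boldsymbol x$, with $\boldsymbol K=[\boldsymbol{K_T}\;\boldsymbol{K_I}]$, into \eqref{augmentdhs}. The $\boldsymbol T$-row becomes
\[
\dot{\boldsymbol T}=(\boldsymbol{A_e}-\boldsymbol{B_eK_T})\boldsymbol T-\boldsymbol{B_eK_I}\boldsymbol\xi+\boldsymbol B_\omega\boldsymbol{\omega^{HP}}+\boldsymbol{w^h}.
\]
The $\boldsymbol\xi$-row is $\dot{\boldsymbol\xi}=\boldsymbol e=\boldsymbol{CT}+\boldsymbol{Dh^G}=(\boldsymbol C-\boldsymbol{DK_T})\boldsymbol T-\boldsymbol{DK_I}\boldsymbol\xi$. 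Together these give exactly $\boldsymbol A_{\rm cl}=\boldsymbol A_{aug}-\boldsymbol B_{aug}\boldsymbol K$ and \eqref{closeloopdhs}, so this step is pure bookkeeping.

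\textbf{Unique equilibrium.} Fix constant inputs $\bar{\boldsymbol\omega}^{HP}$ and $\bar{\boldsymbol w}^h$. An equilibrium must satisfy
\[
\boldsymbol A_{\rm cl}\boldsymbol x^*=-\boldsymbol B_{\rm cl}^{(\omega)}\bar{\boldsymbol\omega}^{HP}-\boldsymbol B_{\rm cl}^{(h)}\bar{\boldsymbol w}^h.
\]
A Hurwitz matrix has no zero eigenvalue, so $\boldsymbol A_{\rm cl}$ is invertible. Hence
\[
\boldsymbol x^*=-\boldsymbol A_{\rm cl}^{-1}\bigl(\boldsymbol B_{\rm cl}^{(\omega)}\bar{\boldsymbol\omega}^{HP}+\boldsymbol B_{\rm cl}^{(h)}\bar{\boldsymbol w}^h\bigr)
\]
exists and is unique.

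\textbf{Convergence.} Set $\tilde{\boldsymbol x}=\boldsymbol x-\boldsymbol x^*$. Then $\dot{\tilde{\boldsymbol x}}=\boldsymbol A_{\rm cl}\tilde{\boldsymbol x}$, so $\tilde{\boldsymbol x}(t)=e^{\boldsymbol A_{\rm cl}t}\tilde{\boldsymbol x}(0)\to\boldsymbol 0$ exponentially, and therefore $\boldsymbol x(t)\to\boldsymbol x^*$.

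The error is linear in the state: $\boldsymbol e=[\boldsymbol C-\boldsymbol{DK_T}\;\;-\boldsymbol{DK_I}]\boldsymbol x$, so $\boldsymbol e(t)\to\boldsymbol e^*:=[\boldsymbol C-\boldsymbol{DK_T}\;\;-\boldsymbol{DK_I}]\boldsymbol x^*$. By construction of \eqref{closeloopdhs}, this row block is exactly the $\boldsymbol\xi$-rows of $\boldsymbol A_{\rm cl}$. The $\boldsymbol\xi$-rows of the input matrices $\boldsymbol B_{\rm cl}^{(\omega)}$ and $\boldsymbol B_{\rm cl}^{(h)}$ are zero. Reading off the lower block of the equilibrium equation therefore gives $\boldsymbol e^*=\dot{\boldsymbol\xi}^*=\boldsymbol 0$, so $\lim_{t\to\infty}\boldsymbol e(t)=\boldsymbol 0$. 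This is the standard internal-model (integral action) argument.

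\textbf{Main care point.} The delicate step is making sure no disturbance term enters the $\boldsymbol\xi$-equation. The feedthrough $\boldsymbol{Dh^G}$ is absorbed into the feedback through $\boldsymbol K$, and the coupling $\boldsymbol{B_s}$ from $Sig^H$ is already folded into $\boldsymbol{A_e}$ and $\boldsymbol{B_e}$ in \eqref{originaldhs}. Consequently the $\boldsymbol\xi$-row sees only the state, and the equilibrium condition $\dot{\boldsymbol\xi}=\boldsymbol 0$ forces $\boldsymbol e^*=\boldsymbol 0$ irrespective of the unknown constant disturbances.

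By Lemma~3, this also implies that the limiting $(\boldsymbol T^*,\boldsymbol h^{G*})$ solves \textbf{E1}--\textbf{E2}. That conclusion is a remark rather than part of the claim.
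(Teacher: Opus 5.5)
Your proof is correct and takes the same route as the paper. A Hurwitz $\boldsymbol A_{\rm cl}$ is invertible, which gives a unique equilibrium and exponential convergence $\boldsymbol x(t)\to\boldsymbol x^\ast$, and the $\boldsymbol\xi$-rows of the equilibrium equation (which carry no disturbance terms) force $\boldsymbol e^\ast=\dot{\boldsymbol\xi}^\ast=\boldsymbol 0$; you just spell out the steps the paper's two-line argument leaves implicit, namely the invertibility of $\boldsymbol A_{\rm cl}$, the linearity of $\boldsymbol e$ in $\boldsymbol x$, and the zero $\boldsymbol\xi$-blocks of $\boldsymbol B_{\rm cl}^{(\omega)}$ and $\boldsymbol B_{\rm cl}^{(h)}$.
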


\section{PASSIVITY-BASED FREQUENCY AND
TEMPERATURE CONTROL}
The EPS, with input $\boldsymbol{u_1}$ and output $\boldsymbol{y_1}$, and the DHS, with input $\boldsymbol{u_2}$ and output $\boldsymbol{y_2}$, are interconnected in the CHP system through the relations  
$\boldsymbol{u_1} = -\boldsymbol{y_2}$ and $\boldsymbol{u_2} = \boldsymbol{y_1}$.

\subsection{Stability and optimality of the CHP system}
\begin{lemma}
Under \textbf{Assumption~1}, the unique CHP equilibrium  
$(\boldsymbol{x_\theta^*}, \boldsymbol{\theta^*}, \boldsymbol{u^*}, \boldsymbol{\omega^*}, \boldsymbol{P^{G*}}, \boldsymbol{P^*}, g^*, \boldsymbol{h^{G*}}, \boldsymbol{\xi^*}, \boldsymbol{T^*}, \boldsymbol{e^*}, \boldsymbol{h^{P*}}, \boldsymbol{p^{H*}})$  
satisfies $\boldsymbol{\omega^* = 0}$ and $\boldsymbol{e^* = 0}$.  
Furthermore, the equilibrium meets the control objectives of both subsystems:  
$\lim_{t\to\infty} \omega_i = 0$ for all $i \in \mathcal{E}$ (\textbf{Lemma~1});  
and $\lim_{t\to\infty} \boldsymbol{T} = \boldsymbol{T^*}$,  
$\lim_{t\to\infty} \boldsymbol{h^G} = \boldsymbol{h^{G*}}$ (\textbf{Lemma~3}).
\end{lemma}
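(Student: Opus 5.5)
The plan is to characterize the CHP equilibrium by splitting it along the interconnection $\boldsymbol{u_1}=-\boldsymbol{y_2}=-\boldsymbol{p^H}$, $\boldsymbol{u_2}=\boldsymbol{y_1}=\boldsymbol{\omega^{HP}}$, and then handle each subsystem with the earlier lemmas. First fix any equilibrium of the interconnected system. At that equilibrium the HP power $\boldsymbol{p^{H*}}$ is a constant. So the EPS with AGC \eqref{gen_dyn}--\eqref{eq:agc} is at an equilibrium driven by the constant disturbances $\boldsymbol{\bar P^L}$ and $\boldsymbol{\bar p^H}=\boldsymbol{p^{H*}}$. By \textbf{Lemma~1}, under \textbf{Assumption~1} this forces $\boldsymbol{\omega^*}=\boldsymbol 0$, and in particular $\boldsymbol{\omega^{HP*}}=\boldsymbol 0$. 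The same conclusion follows directly from $\dot g=-\omega_r=0$ together with synchronization $\boldsymbol{\omega^*}=\omega^{com}\boldsymbol 1$.

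Next I substitute $\boldsymbol{\bar\omega^{HP}}=\boldsymbol 0$ into the closed-loop DHS \eqref{closeloopdhs}. Assuming $\boldsymbol{A}_{\rm cl}$ is Hurwitz, as in \textbf{Lemma~4}, the DHS has a unique equilibrium $(\boldsymbol{T^*},\boldsymbol{\xi^*})$. At that equilibrium $\dot{\boldsymbol\xi}=\boldsymbol e=\boldsymbol 0$, so $\boldsymbol{e^*}=\boldsymbol 0$ and hence $Sig^{H*}=\boldsymbol{s_T e^*}=0$. This gives $\boldsymbol{p^{H*}}=\boldsymbol{\gamma^E}\boldsymbol{\omega^{HP*}}+\boldsymbol{\gamma^H}Sig^{H*}\boldsymbol 1=\boldsymbol 0$ and $\boldsymbol{h^{P*}}=\boldsymbol{CoP}\,\boldsymbol{p^{H*}}=\boldsymbol 0$.

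Uniqueness comes from chaining these steps. The vector $\boldsymbol{p^{H*}}$ is pinned down, which by \textbf{Lemma~1} gives a unique EPS equilibrium, which in turn gives a unique DHS equilibrium. Since $\boldsymbol{e^*}=\boldsymbol 0$ means $\boldsymbol{F^M h^{G*}}=\boldsymbol 0$ and $\boldsymbol{1^\top F^D T^*}=0$, \textbf{Lemma~3} shows that $(\boldsymbol{T^*},\boldsymbol{h^{G*}})$ is the unique optimizer of \textbf{E1}--\textbf{E2}.

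The convergence claims require that trajectories of the coupled system actually reach this equilibrium, and I expect this to be the main obstacle. The two subsystems interact, and \textbf{Lemmas~1} and~\textbf{4} are stated only for constant inputs. My approach is to argue asymptotically. By \textbf{Lemma~2} the EPS is strictly passive, and I take the DHS loop to be stabilized, for example via the later $\mathcal{H}_\infty$--passivity LMI condition, or simply by assuming the interconnection is asymptotically stable. Then $\boldsymbol{\omega^{HP}}(t)\to\boldsymbol 0$, and the exponentially stable DHS \eqref{closeloopdhs}, driven by a vanishing input plus constant $\boldsymbol{w^h}$, has $\boldsymbol x(t)$ converging to the equilibrium computed above (input-to-state stability of a Hurwitz linear system). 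This yields $\boldsymbol T\to\boldsymbol{T^*}$ and $\boldsymbol{h^G}=-\boldsymbol{K_TT}-\boldsymbol{K_I\xi}\to\boldsymbol{h^{G*}}$. Symmetrically, $\boldsymbol{p^H}(t)\to\boldsymbol{0}$ makes the EPS an asymptotically autonomous system whose limit is the \textbf{Lemma~1} equilibrium, so $\omega_i\to0$.

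To make this non-circular, I would fall back on a composite Lyapunov function $V_e+\boldsymbol{\tilde x^\top P\tilde x}$ with a LaSalle-type argument. For the statement as posed, however, it suffices to note that convergence is inherited from the stability of the interconnection, which is the structure the statement's references to \textbf{Lemma~1} and \textbf{Lemma~3} suggest.
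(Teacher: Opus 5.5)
Your equilibrium argument uses the same chain as the paper: Lemma~1 gives $\boldsymbol{\omega^*}=\boldsymbol 0$, Lemma~4 gives $\boldsymbol{e^*}=\boldsymbol 0$, and Lemma~3 gives optimality. It is, however, more explicit. You derive $Sig^{H*}=0$, and hence $\boldsymbol{p^{H*}}=\boldsymbol 0$ and $\boldsymbol{h^{P*}}=\boldsymbol 0$. You also obtain existence and uniqueness by chaining the two subsystem equilibria, which the paper simply asserts.

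You genuinely differ on convergence. The paper cites Lemma~4 for $\boldsymbol e(t)\to\boldsymbol 0$ and Lemma~1 for $\boldsymbol\omega(t)\to\boldsymbol 0$. But Lemma~4 assumes a constant $\boldsymbol{\bar\omega^{HP}}$, and Lemma~1 only locates the equilibrium, so neither addresses trajectories of the coupled loop. You see this and take convergence from asymptotic stability of the interconnection, i.e.\ the Theorem~1 setting ($\boldsymbol{A_{\rm cl}}$ Hurwitz plus strict passivity of the DHS port). This is not circular, since Theorem~1 uses only the existence and location of the equilibrium. The cost is hypotheses beyond ``Under Assumption~1''. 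Still, the paper tacitly needs the Hurwitz condition as well, and no convergence claim can hold without some stability assumption.

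Once you assume stability of the interconnection, your cascade/ISS step and the ``asymptotically autonomous EPS'' step become redundant. The latter would also need more justification on its own for the nonlinear EPS. Citing Theorem~1's composite storage function directly is cleaner. Finally, your observation that $\boldsymbol{p^{H*}}=\boldsymbol 0$ and $\boldsymbol{\omega^{HP*}}=\boldsymbol 0$ is what lets Lemma~2 and the DHS KYP inequality be applied in shifted coordinates $\boldsymbol x-\boldsymbol{x^*}$ without shifting the port variables. That point is worth stating in your Lyapunov fallback.
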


\begin{theorem}
Under \textbf{Assumptions 1-2}, if the closed-loop augmented DHS in~\eqref{errorde2}--\eqref{closeloopdhs} is strictly passive with respect to the input--output pair $(\boldsymbol{u_2 = \omega^{HP}},\, \boldsymbol{y_2 = p^H})$, and if $\boldsymbol{A_{\rm cl}}$ is Hurwitz, then the CHP equilibrium defined in \textbf{Lemma~5} is asymptotically stable.
\end{theorem}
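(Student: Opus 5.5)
The plan is the standard passivity interconnection argument, applied to the system written in deviation coordinates about the CHP equilibrium of \textbf{Lemma~5}. First we shift all states to incremental variables: $\tilde{\boldsymbol x}=\boldsymbol x-\boldsymbol x^*$, $\tilde{\boldsymbol\omega}=\boldsymbol\omega-\boldsymbol\omega^*=\boldsymbol\omega$, $\tilde{\boldsymbol x}_\theta$, $\tilde g$, $\tilde{\boldsymbol P}^G$, and similarly for the others. The disturbances $\boldsymbol{\bar P^L}$ and $\boldsymbol{\bar w^h}$ are constant, so they cancel in the incremental dynamics. The interconnection relations $\boldsymbol u_1=-\boldsymbol y_2$ and $\boldsymbol u_2=\boldsymbol y_1$ are linear, so they also hold for the incremental signals $\tilde{\boldsymbol u}_1=-\tilde{\boldsymbol y}_2$ and $\tilde{\boldsymbol u}_2=\tilde{\boldsymbol y}_1$.

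Next we invoke \textbf{Lemma~2}, read as incremental passivity about the equilibrium; this is how its storage $V_{\mathrm e}$ is built, e.g.\ from kinetic energy, a shifted power-flow potential and the positive-real generation states. It gives $\dot V_{\mathrm e}\le \tilde{\boldsymbol y}_1^\top\tilde{\boldsymbol u}_1-\rho_e\|\tilde{\boldsymbol y}_1\|^2$. The DHS closed loop \eqref{errorde2}--\eqref{closeloopdhs} is linear, and we assume it is strictly passive. By the KYP lemma there is then $P\succ0$ with $V_{\mathrm d}=\tilde{\boldsymbol x}^\top P\tilde{\boldsymbol x}$ satisfying
\[
\dot V_{\mathrm d}\le \tilde{\boldsymbol y}_2^\top\tilde{\boldsymbol u}_2-\epsilon\|\tilde{\boldsymbol x}\|^2 .
\]
We take strict passivity in the state-strict sense, which the LMI certificate delivers; input-strictness alone also suffices, with minor changes below. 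Setting $V=V_{\mathrm e}+V_{\mathrm d}$, the interconnection makes the supply-rate cross terms cancel, $\tilde{\boldsymbol y}_1^\top\tilde{\boldsymbol u}_1+\tilde{\boldsymbol y}_2^\top\tilde{\boldsymbol u}_2=0$. Hence
\[
\dot V\le-\rho_e\|\boldsymbol\omega^{HP}\|^2-\epsilon\|\tilde{\boldsymbol x}\|^2\le0 .
\]

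This only gives stability, so we finish with LaSalle's invariance principle. We need $V$ to be positive definite in a neighbourhood of the equilibrium, so that sublevel sets are compact; this holds if the power-flow angle differences lie in $(-\pi/2,\pi/2)$, which makes the shifted potential locally convex. On the largest invariant set where $\dot V=0$ we have $\tilde{\boldsymbol x}\equiv0$ and $\boldsymbol\omega^{HP}\equiv0$. The DHS output $\tilde{\boldsymbol p}^H$ is then zero, so the EPS evolves autonomously with zero HP input. We then use the strict dissipation inside $V_{\mathrm e}$, namely the damping $D_i\ge D_{\min}$ and the SPR property of $G_i$ from \textbf{Assumption~2}. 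The EPS storage gives $\dot V_{\mathrm e}\le-\sum_i\rho_i\omega_i^2$ for all buses, not just the HP buses. So on the invariant set $\boldsymbol\omega\equiv0$, which forces $\dot{\boldsymbol x}_\theta=0$ and $\dot{\boldsymbol P}^G=0$, and constant $g$ by \eqref{eq:agc}. The bus equations then imply that this constant state is an equilibrium. By the uniqueness in \textbf{Lemma~1} and \textbf{Lemma~5}, it is the CHP equilibrium. Together with the Hurwitz property of $\boldsymbol A_{\rm cl}$, which excludes residual DHS motion and guarantees uniqueness through \textbf{Lemma~4}, this yields local asymptotic stability.

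The main obstacle is the EPS side. Lemma~2 is stated with output only $\boldsymbol\omega^{HP}$, so we need the stronger inequality with dissipation in all $\omega_i$ and in the generator states. We also need $V_{\mathrm e}$ to be a genuinely positive definite incremental storage with respect to the nonlinear sine power flow. To handle this, we would construct $V_{\mathrm e}$ explicitly from three parts: $\tfrac12\sum M_i\omega_i^2$; the Bregman-type potential $\sum|B_{ij}|V_iV_j\big(\cos\theta^*_{ij}-\cos\theta_{ij}-\sin\theta^*_{ij}(\theta_{ij}-\theta^*_{ij})\big)$; and KYP storages realizing each SPR $G_i$. We would then verify the required dissipation directly.
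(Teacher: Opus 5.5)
Your proposal is correct and follows the same route as the paper: add the EPS storage of Lemma~2 to a DHS storage, let the interconnection $\boldsymbol u_1=-\boldsymbol y_2$, $\boldsymbol u_2=\boldsymbol y_1$ cancel the supply cross terms, and finish with LaSalle (the paper writes the DHS dissipation in output-strict form, $-\rho_h\|\boldsymbol{p^H}\|^2$, where you use a state-strict bound). Your extra steps supply details the paper's proof leaves implicit, since it stops at $\boldsymbol{p^H}=\boldsymbol 0$, $\boldsymbol{\omega^{HP}}=\boldsymbol 0$: incremental coordinates, local positive definiteness of the sine-flow potential, and the all-bus EPS dissipation used to show the invariant set contains only the equilibrium.
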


\subsection{Mixed $H_\infty$/Passivity Controller Design} 
\subsubsection{Passivity-based controller design}
From \textbf{Theorem~1}, the temperature regulator $\boldsymbol{h^G = -Kx}$ must satisfy:  
(i)~$\boldsymbol{A_{\rm cl}}$ is Hurwitz;  
(ii)~the closed-loop augmented DHS~(\ref{errorde2}--\ref{closeloopdhs}) is strictly passive 
with input $\boldsymbol{\omega^{HP}}$ and output $\boldsymbol{p^H} 
= \boldsymbol{C_y(K)x} + \boldsymbol{\gamma^E \omega^{HP}}$.
The system is strictly passive and $\boldsymbol{A_{\rm cl}}$ is Hurwitz 
if there exist $\boldsymbol{P}\succ 0$ and $\rho>0$,
\begin{equation}\label{eq:KYP-BMI-strict}
\begin{bmatrix}
\boldsymbol{A_{\rm cl}}^\top \boldsymbol{P} 
+ \boldsymbol{P}\boldsymbol{A_{\rm cl}}
&
\boldsymbol{P}\boldsymbol{B_{\rm cl}^{(\omega)}} 
- \boldsymbol{C_y(K)}^\top
\\[5pt]
*
&
-(\boldsymbol{\gamma^E} + \boldsymbol{\gamma^E}^\top\,) 
-\rho \boldsymbol I
\end{bmatrix}
\prec \boldsymbol 0.
\end{equation}

To transform original KYP-based passivity BMI~\eqref{eq:KYP-BMI-strict} into LMI for optimization, we introduce the variables
$\boldsymbol{X}:=\boldsymbol{P^{-1}}\succ0$ and $\boldsymbol Y := \boldsymbol K \boldsymbol X$. The KYP-based passivity BMI~\eqref{eq:KYP-BMI-strict} is equivalent to 
\begin{equation}\label{eq:LMI_pass_direct}
\boxed{
\begin{bmatrix}
\boldsymbol{LMI_1} 
& 
\boldsymbol{LMI_2}\\[2pt]
\boldsymbol{LMI_2}^\top 
& 
-\bigl(\boldsymbol\gamma^E + \boldsymbol{{\gamma^E}^\top}\bigr) 
-\rho\,\boldsymbol I
\end{bmatrix}
\ \prec\ \boldsymbol 0,
\boldsymbol X\succ0,\ \rho>0.}
\end{equation}
where $
\boldsymbol{LMI}_1:=\big(\boldsymbol{A}_{\rm aug}\boldsymbol{X}-\boldsymbol{B}_{\rm aug}\boldsymbol{Y}\big)+\big(\boldsymbol{A}_{\rm aug}\boldsymbol{X}-\boldsymbol{B}_{\rm aug}\boldsymbol{Y}\big)^\top$ and
$\boldsymbol{LMI}_2:=\boldsymbol{B}^{(\omega)}_{\rm cl}-\boldsymbol{X}\boldsymbol{S}_C^\top+\boldsymbol{Y}^\top\boldsymbol{S}_D^\top.$

\begin{lemma}
The original DHS~(\ref{originaldhs}) is intrinsically low-pass.  
Let $\boldsymbol{G(s)}$ be the closed-loop transfer matrix from $\boldsymbol{\omega^{HP}}$ to $\boldsymbol{p^H}$ for the augmented DHS~\eqref{errorde2}--\eqref{closeloopdhs}.  
It satisfies  
$\lim_{\omega\to0}\boldsymbol{G(j\omega)}=\boldsymbol{\gamma^E}$.
\end{lemma}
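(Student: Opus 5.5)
We write the closed-loop transfer matrix explicitly and evaluate it at $s=0$. From \eqref{errorde2}--\eqref{closeloopdhs}, with $\boldsymbol{h^G}=-\boldsymbol{Kx}$, we have
\begin{equation*}
\boldsymbol{G(s)}=\boldsymbol{\gamma^E}+\boldsymbol{C_y(K)}\bigl(s\boldsymbol I-\boldsymbol{A_{\rm cl}}\bigr)^{-1}\boldsymbol{B_{\rm cl}^{(\omega)}} .
\end{equation*}
Because $\boldsymbol{A_{\rm cl}}$ is Hurwitz, the resolvent is analytic on a neighbourhood of the closed right half-plane. In particular $\boldsymbol{G(j\omega)}$ is continuous at $\omega=0$, so the limit equals
\begin{equation*}
\boldsymbol{G(0)}=\boldsymbol{\gamma^E}-\boldsymbol{C_y(K)}\boldsymbol{A_{\rm cl}^{-1}}\boldsymbol{B_{\rm cl}^{(\omega)}}.
\end{equation*}
The claim therefore reduces to showing that the DC gain of the strictly proper part vanishes, i.e.\ $\boldsymbol{C_y(K)}\boldsymbol{A_{\rm cl}^{-1}}\boldsymbol{B_{\rm cl}^{(\omega)}}=\boldsymbol 0$.

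To establish this, we use the steady-state interpretation rather than inverting $\boldsymbol{A_{\rm cl}}$ blockwise. Apply a constant input $\boldsymbol{\bar\omega^{HP}}$ with $\boldsymbol{w^h}=\boldsymbol 0$. The equilibrium $\boldsymbol{\bar x}=-\boldsymbol{A_{\rm cl}^{-1}}\boldsymbol{B_{\rm cl}^{(\omega)}}\boldsymbol{\bar\omega^{HP}}$ is unique by Lemma~4. The $\boldsymbol\xi$-row of $\dot{\boldsymbol x}=\boldsymbol 0$ reads $\boldsymbol{\dot\xi}=\boldsymbol e=\boldsymbol{C\bar T}+\boldsymbol{D\bar h^G}=\boldsymbol 0$, which is the integral-action conclusion of Lemma~4 for constant disturbances. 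In particular $Sig^H=\boldsymbol{s_T e}=0$ at steady state.

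Next we check that the strictly proper part of the output is exactly this signal. By construction in \eqref{errorde2}, $\boldsymbol{C_y(K)x}=\boldsymbol{S_C x}-\boldsymbol{S_D Kx}=\boldsymbol{\gamma^H}\boldsymbol{1_{n^{HP}}}\boldsymbol{s_T}(\boldsymbol{CT}+\boldsymbol{Dh^G})=\boldsymbol{\gamma^H}Sig^H\boldsymbol{1_{n^{HP}}}$. Evaluated at $\boldsymbol{\bar x}$, this is zero for every $\boldsymbol{\bar\omega^{HP}}$. Hence $\boldsymbol{C_y(K)}\boldsymbol{A_{\rm cl}^{-1}}\boldsymbol{B_{\rm cl}^{(\omega)}}=\boldsymbol 0$ and $\boldsymbol{G(0)}=\boldsymbol{\gamma^E}$. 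The same identity can be checked algebraically by observing that the output row $[\boldsymbol{s_T}(\boldsymbol C-\boldsymbol{DK_T}),\,-\boldsymbol{s_T DK_I}]$ is $\boldsymbol{s_T}$ times the second block row of $\boldsymbol{A_{\rm cl}}$. That row is therefore $[\boldsymbol 0,\ \boldsymbol{s_T}]\boldsymbol{A_{\rm cl}}$, so $\boldsymbol{C_y(K)}\boldsymbol{A_{\rm cl}^{-1}}\boldsymbol{B_{\rm cl}^{(\omega)}}=\boldsymbol{\gamma^H}\boldsymbol{1_{n^{HP}}}[\boldsymbol 0,\ \boldsymbol{s_T}]\boldsymbol{B_{\rm cl}^{(\omega)}}=\boldsymbol 0$, since the $\boldsymbol\xi$-block of $\boldsymbol{B_{\rm cl}^{(\omega)}}$ is zero. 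I would present this algebraic version as the main argument, because it is one line and avoids appealing to trajectories.

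For the ``intrinsically low-pass'' part of the statement, the plan is to note that the open-loop map from $(\boldsymbol{h^G},\boldsymbol{\omega^{HP}},\boldsymbol{w^h})$ to $\boldsymbol T$ in \eqref{originaldhs} has no direct feedthrough. It is strictly proper with poles given by the thermal time constants $\boldsymbol V/(\rho C_p q)$, so its gain decays like $1/\omega$ at high frequency. This is a qualitative observation and needs no further argument.

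The main obstacle is bookkeeping rather than analysis: one must make sure $\boldsymbol{K}$ is partitioned as $[\boldsymbol{K_T},\ \boldsymbol{K_I}]$ consistently, so that $\boldsymbol{C_y(K)}$ really equals $\boldsymbol{\gamma^H}\boldsymbol{1_{n^{HP}}}[\boldsymbol 0,\ \boldsymbol{s_T}]\boldsymbol{A_{\rm cl}}$. One must also make sure the Hurwitz assumption from Lemma~4 is in force, so that $\boldsymbol{A_{\rm cl}^{-1}}$ exists.
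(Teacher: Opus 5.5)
Your proof is correct, and it differs from the paper's in its main argument.

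\textbf{Same as the paper.} Your trajectory-based paragraph is exactly the paper's proof. The paper invokes Lemma~4 to get $\boldsymbol{e^\ast}=\boldsymbol 0$, hence $Sig^H=0$ and $\boldsymbol{p^{H\ast}}=\boldsymbol{\gamma^E}\boldsymbol{\omega^{HP\ast}}$. It then reads this equilibrium relation directly as $\lim_{\omega\to0}\boldsymbol{G(j\omega)}=\boldsymbol{\gamma^E}$. In doing so it leaves implicit that, for Hurwitz $\boldsymbol{A_{\rm cl}}$, the equilibrium input--output map is the DC gain and $\boldsymbol{G}$ is continuous at $0$. Your formula $\boldsymbol{G(0)}=\boldsymbol{\gamma^E}-\boldsymbol{C_y(K)}\boldsymbol{A_{\rm cl}^{-1}}\boldsymbol{B_{\rm cl}^{(\omega)}}$ makes that step explicit.

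\textbf{Different from the paper.} The algebraic route you prefer is genuinely different. The factorization $\boldsymbol{C_y(K)}=\boldsymbol{\gamma^H}\boldsymbol{1_{n^{HP}}}[\boldsymbol 0,\ \boldsymbol{s_T}]\boldsymbol{A_{\rm cl}}$ needs neither Lemma~4 nor any convergence argument, only invertibility of $\boldsymbol{A_{\rm cl}}$. It also proves more. Use $\boldsymbol{A_{\rm cl}}(s\boldsymbol I-\boldsymbol{A_{\rm cl}})^{-1}=s(s\boldsymbol I-\boldsymbol{A_{\rm cl}})^{-1}-\boldsymbol I$ together with the vanishing $\boldsymbol\xi$-block of $\boldsymbol{B_{\rm cl}^{(\omega)}}$. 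Then, for every $s$ outside the spectrum of $\boldsymbol{A_{\rm cl}}$, $\boldsymbol{G(s)}-\boldsymbol{\gamma^E}=s\,\boldsymbol{\gamma^H}\boldsymbol{1_{n^{HP}}}[\boldsymbol 0,\ \boldsymbol{s_T}](s\boldsymbol I-\boldsymbol{A_{\rm cl}})^{-1}\boldsymbol{B_{\rm cl}^{(\omega)}}$. So the thermal channel has a blocking zero at the origin, because $Sig^H=\tfrac{d}{dt}(\boldsymbol{s_T}\boldsymbol\xi)$. This also gives the rate $\|\boldsymbol{G(j\omega)}-\boldsymbol{\gamma^E}\|=O(|\omega|)$. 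The paper's version instead gives the physical reading: integral action enforcing the optimality condition is what pins the low-frequency gain.

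\textbf{One claim to drop.} On the ``low-pass'' part, both treatments are qualitative. Your strict-properness remark is enough for the roll-off of the map into $\boldsymbol T$. However, the assertion that the poles are the time constants $V/(\rho C_p q)$ is wrong. The poles are eigenvalues of $\boldsymbol{A_e}=-\boldsymbol A+\boldsymbol{B_sC}$, which are coupled through the Kirchhoff matrix. Moreover, since $\boldsymbol A\boldsymbol 1=\boldsymbol 0$, the open-loop DHS need not even be asymptotically stable.
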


In the low-frequency regime, the DHS behaves as a slowly varying thermal-storage system, so $\boldsymbol{G(j\omega)}$ provides a useful ``virtual damping’’ effect to the EPS. As disturbances reach the DHS time scale (about $50$--$300\,\text{s}$ or $0.003$--$0.02\,\text{Hz}$), pipeline inertia and transport delay introduce significant phase lag, exciting internal diffusive modes rather than attenuating them. Consequently, $\boldsymbol{G(j\omega)}$ loses its damping role and becomes a mid-frequency cross-coupling amplifier, returning stored thermal energy toward the EPS interface and reducing robustness. At high frequencies, the DHS is strongly diffusive, and $\boldsymbol{G(j\omega)}$ acts mainly as a parasitic cross-coupling path; any non-negligible gain in this range increases the transmission of high-frequency disturbances across the EPS–DHS boundary.

This motivates a loop-shaping design that suppresses the \emph{mid- and high-frequency} magnitude of $\boldsymbol{G(j\omega)}$ while preserving its low-frequency behavior. Such shaping prevents excitation of DHS thermal modes and limits unwanted cross-coupling without sacrificing the beneficial low-frequency virtual damping. The goal is therefore to attenuate the mid- and high-frequency components of the channel $\boldsymbol{\omega^{HP}}\!\mapsto\!\boldsymbol{p^H}$ while keeping its low-frequency characteristics essentially unchanged. Reducing $\|\boldsymbol{G(j\omega)}\|$ in this non-damping band mitigates cross-coupling, improves disturbance rejection, and enhances robustness of the DHS–EPS interaction.

\subsubsection{$\mathcal{H}_\infty$ frequency–shaping filter}
\label{subsec:lowpass}
We introduce a high-pass weighting filter for loop–shaping and impose an 
$\mathcal{H}_\infty$ performance constraint on the weighted output.
The high-pass weight is chosen as
$\boldsymbol{W}_{\mathrm{HP}}(s)
=
\alpha+\frac{s/\omega_h}{1+s/\omega_h}, 
0<\alpha\ll1$,
where $\omega_h>0$ is the cutoff frequency of the first-order low-pass factor
$1/(1+s/\omega_h)$.
A state-space realization of the filter is
$\dot{\boldsymbol{x_\omega}}
=
-\omega_h\,\boldsymbol{x_\omega}
+
\omega_h\,\boldsymbol{p^H},\boldsymbol{z}
=
(1+\alpha)\,\boldsymbol{p^H}
-
\boldsymbol{x_\omega}$,
where $\boldsymbol{x_\omega}\in\mathbb{R}_{n^{HP}}$ stores the low-frequency component 
of $\boldsymbol{p^H}$, while $\boldsymbol{z}$ represents the weighted 
high-frequency component to be minimized under the $\mathcal{H}_\infty$ criterion.
The small constant $\alpha$ slightly elevates the low-frequency gain for numerical
robustness and does not affect the cutoff frequency $\omega_h$.
Define the augmented weighted state  
$\boldsymbol{x}_{\mathrm{lp}}
=
\begin{bmatrix}
\boldsymbol{x}^\top & \boldsymbol{x_\omega}^\top
\end{bmatrix}^\top$,  
$(\boldsymbol{\omega^{HP}} \!\to\! \boldsymbol{z})$ 
admits the state-space form:
\small{\begin{subequations}
\begin{align}
\dot{\boldsymbol{x}}_{\mathrm{lp}}
=&
\underbrace{\begin{bmatrix}
\boldsymbol{A}_{\mathrm{cl}}(\boldsymbol{K}) & \boldsymbol{0}\\
\omega_h\,\boldsymbol{C}_y(\boldsymbol{K}) & -\omega_h\,\boldsymbol{I}_{n^{HP}}
\end{bmatrix}}_{\boldsymbol{A}_{\mathrm{lp}}(\boldsymbol{K})}
\boldsymbol{x}_{\mathrm{lp}}
+
\underbrace{\begin{bmatrix}
\boldsymbol{B}_{\mathrm{cl}}^{(\omega)}\\[2pt]
\boldsymbol{0}
\end{bmatrix}}_{\boldsymbol{B}_{\mathrm{lp}}}
\boldsymbol{\omega^{HP}}\notag\\
&+
\begin{bmatrix}
\boldsymbol{B}_{\mathrm{cl}}^{(h)}\\
\boldsymbol{0}
\end{bmatrix}
\boldsymbol{w^{h}},
\\
\boldsymbol{z}
=&
\underbrace{\big[(1+\alpha)\,\boldsymbol{C}_y(\boldsymbol{K}) 
\quad -\,\boldsymbol{I}_{n^{HP}}\big]}_{\boldsymbol{C}_{\mathrm{lp}}(\boldsymbol{K})}
\boldsymbol{x}_{\mathrm{lp}},
\boldsymbol{D}_{\mathrm{lp}}=\boldsymbol{0}.
\end{align}
\end{subequations}}

The Bounded Real Lemma (BRL) provides the sufficient condition for
$\|\mathcal{G}_{\boldsymbol{\omega^{HP}}\to \boldsymbol{z}}\|_\infty < \gamma_\infty$ as the existence of 
$\boldsymbol{P}_{\mathrm{lp}}\succ\boldsymbol{0}$ 
such that
\small{\begin{equation}\label{eq:BRL-lp}
\begin{bmatrix}
\boldsymbol{A}_{\mathrm{lp}}(\boldsymbol{K})^\top \boldsymbol{P}_{\mathrm{lp}} 
+ \boldsymbol{P}_{\mathrm{lp}}\boldsymbol{A}_{\mathrm{lp}}(\boldsymbol{K})
&
\boldsymbol{P}_{\mathrm{lp}}\boldsymbol{B}_{\mathrm{lp}}
&
\boldsymbol{C}_{\mathrm{lp}}(\boldsymbol{K})^\top
\\[3pt]
\boldsymbol{B}_{\mathrm{lp}}^\top \boldsymbol{P}_{\mathrm{lp}}
&
-\gamma_\infty^{2}\,\boldsymbol{I}
&
\boldsymbol{0}
\\[3pt]
\boldsymbol{C}_{\mathrm{lp}}(\boldsymbol{K})
&
\boldsymbol{0}
&
-\,\boldsymbol{I}_{n^{HP}}
\end{bmatrix}
\prec \boldsymbol{0}.
\end{equation}}

The BRL-based low-pass performance BMI~\eqref{eq:BRL-lp}  is equivalent to 
\small{\begin{equation}\label{eq:LMI_lp_direct}
\boxed{\;
\begin{bmatrix}
\boldsymbol{\Psi}_{11} & \boldsymbol{B}_{\rm lp} & \boldsymbol{\Psi}_{13}\\
\boldsymbol{B}_{\rm lp}^\top & -\gamma_\infty^{2}\,\boldsymbol{I} & \boldsymbol{0}\\
\boldsymbol{\Psi}_{31} & \boldsymbol{0} & -\boldsymbol{I}_{n^{HP}}
\end{bmatrix}\prec 0,\quad
\boldsymbol{X}\succ0,\ q_\omega>0,\;}
\end{equation}}

\noindent where $\boldsymbol{\Psi}_{11}$,
$\boldsymbol{\Psi}_{13}$,
and $\boldsymbol{\Psi}_{31}:=\boldsymbol{\Psi}_{13}^\top$ are defined in Appendix.

Combining the KYP-based passivity LMI~\eqref{eq:LMI_pass_direct}
and the BRL-based  LMI~\eqref{eq:LMI_lp_direct},
the controller design can be formulated as:
\begin{equation}\label{eq:joint-SDP}
\boxed{
\begin{aligned}
\min_{\boldsymbol X,\boldsymbol Y,q_\omega,\rho,\gamma_\infty}& \gamma_\infty^2\\[3pt]
\text{s.t. } & ~\eqref{eq:LMI_pass_direct}, ~\eqref{eq:LMI_lp_direct}.
\end{aligned}}
\end{equation}
Upon feasibility the single semidefinite program (SDP), the controller is recovered as
$\boldsymbol K=\boldsymbol Y\boldsymbol X^{-1}$.

\section{Numerical Experiments}
To validate the proposed electro–thermal control strategy, we conduct simulations on the modified Barry Island test system\cite{yi2023energy}. The testbed consists of a reduced 33-bus EPS and a 33-node DHS, with three HPs installed at buses/nodes \{1,\,32,\,33\}.

\subsection{DHS properties of the proposed temperature regulator}
\subsubsection{Disturbance-independent DHS Regulator}
We implement the augmented DHS~\eqref{augmentdhs} with the integral state $\boldsymbol{\xi}$ enforcing the optimality condition, so no disturbance forecasting is required. The closed-loop matrix $\boldsymbol{A_{\mathrm{cl}}}$ is Hurwitz, ensuring a unique equilibrium and $\boldsymbol{e}(t)\!\to\!0$ for any bounded disturbance.  
The DHS is tested with $V=50$\,L\footnote{A uniform water volume of $50$\,L per node sets the DHS thermal time scale, yielding closed-loop temperature constants of a few hundred seconds (Fig.~\ref{lf}).} under three heat-load profiles: a constant 0.1\,MW, a 0.2\,MW spike at $t=10$\,s, and decaying white noise. In all cases, temperatures and tracking errors converge, and the optimal equilibrium $(\boldsymbol{T^\ast},\boldsymbol{h^{G\ast}})$ is recovered (Fig.~\ref{lf}\footnote{Temperatures are shown as deviations from nominal; thermal inertia keeps them in the millikelvin range.}).  
These results confirm that the disturbance-independent regulator achieves real-time optimal temperature control, validating \textbf{Lemma~3} and \textbf{Lemma~4}.

\vspace{-1.2cm}
  \begin{figure}[htbp]
	\vspace{0.8cm}
	\hspace{-0.1cm}
	\includegraphics[width=3.2in]{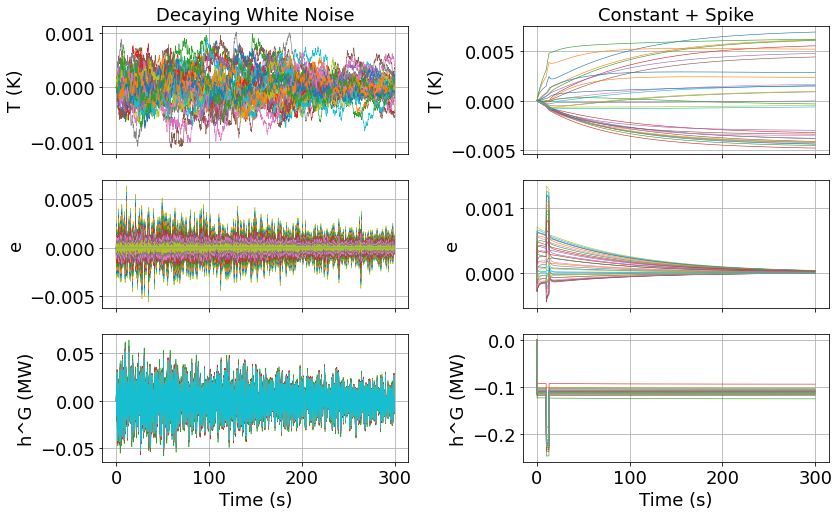}
	\vspace{-0.35cm}
	\caption{DHS with differenent types of disturbances.}
	\label{lf}
\end{figure}
\vspace{-0.3cm}

\subsubsection{MIMO Frequency Response of the DHS Port} 
Fig.~\ref{fr} shows the frequency response of the map $\boldsymbol{\omega^{HP}\!\to p^H}$ for lumped volumes $V\!=\!\{30,50,70\}L$. The 
eigenloci (bottom) confirm that both regulators remain positive real across all frequencies. For the passivity-only design (blue dashed), smaller water volumes accelerate DHS dynamics and create a distinct mid-frequency peak in $\sigma_{\max}(\boldsymbol{G(j\omega)})$, reflecting phase lag as disturbances enter the DHS time scale; this peak weakens and eventually disappears as $V$ increases. In contrast, the joint-OPT controller (orange) suppresses this effect for all volumes, producing a flatter gain profile and an almost vertical 
eigenlocus—evidence of consistently well-damped behavior.
\vspace{-1.0cm}
  \begin{figure}[htbp]
	\vspace{0.4cm}
	\hspace{0.3cm}
	\includegraphics[width=3.2in]{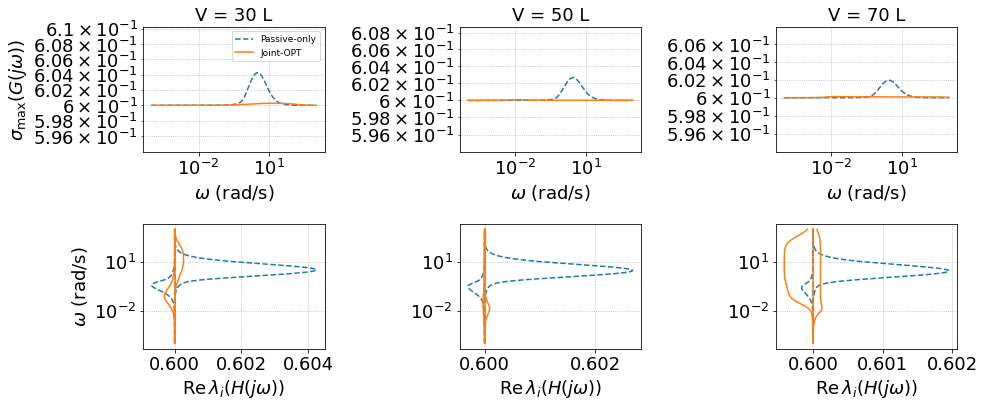}
	\vspace{-0.25cm}
	\caption{Frequency response $\boldsymbol{\omega^{HP}\!\to\!p^H}$.}
	\label{fr}
\end{figure}
\vspace{-0.6cm}

\subsection{Closed-loop CHP behavior}
\subsubsection{Stability and optimality of the CHP system}We test the robustness of the proposed joint  
$\mathcal{H}_{\infty}$--passivity controller under four coupling settings  
$(\gamma^E,\gamma^H)\!\in\!\{(0,0),(0,0.6),(0.6,0),(0.6,0.6)\}$ (Fig.~\ref{comparechp}), separating the roles of electrical support ($\gamma^E$) and thermal responsiveness ($\gamma^H$). Each case uses the same synthesis procedure. All simulations converge to the desired equilibrium,  
$\boldsymbol{w\!\to\!0,\; e\!\to\!0,\; h^G\!\to h^{G\ast},\; T\!\to T^\ast}$,  
confirming that the controller meets the steady-state objectives of both systems.

\subsubsection{The functional roles of the coupling gains} Program:chpcompare

Fig.~\ref{comparechp} highlights the roles of the coupling gains.  
A nonzero $\gamma^H$ sharply reduces DHS tracking error by allowing the HP–DHS subsystem to counteract temperature deviations rather than passively absorbing EPS disturbances (see (a)–(b) and (c)–(d)).  
Similarly, $\gamma^E>0$ improves EPS transients by enabling HPs to provide direct frequency support, reducing overshoot and speeding recovery; when $\gamma^E=0$, the EPS must respond alone, yielding larger deviations.

\subsubsection{The influence of loop shaping}
Table~\ref{tab:freq_ratio_full} reports the joint/ns ratios at three representative frequencies.  
For small $\omega$, both controllers exhibit nearly identical virtual damping.  
Near the DHS thermal eigenfrequency ($0.016$\,Hz), the DHS ceases to provide damping and becomes a disturbance-transmission path. The joint controller suppresses $\sigma_{\max}(\boldsymbol{G(j\omega)})$ via $\mathcal{H}_\infty$ low-pass loop shaping, reducing cross-domain coupling and improving EPS performance.  
At large $\omega$, DHS dynamics roll off naturally, and the joint design steepens this roll-off, further limiting high-frequency transmission.

\vspace{-1.2cm}
  \begin{figure}[htbp]
	\vspace{0.8cm}
	\hspace{-0.3cm}
	\includegraphics[width=3.5in]{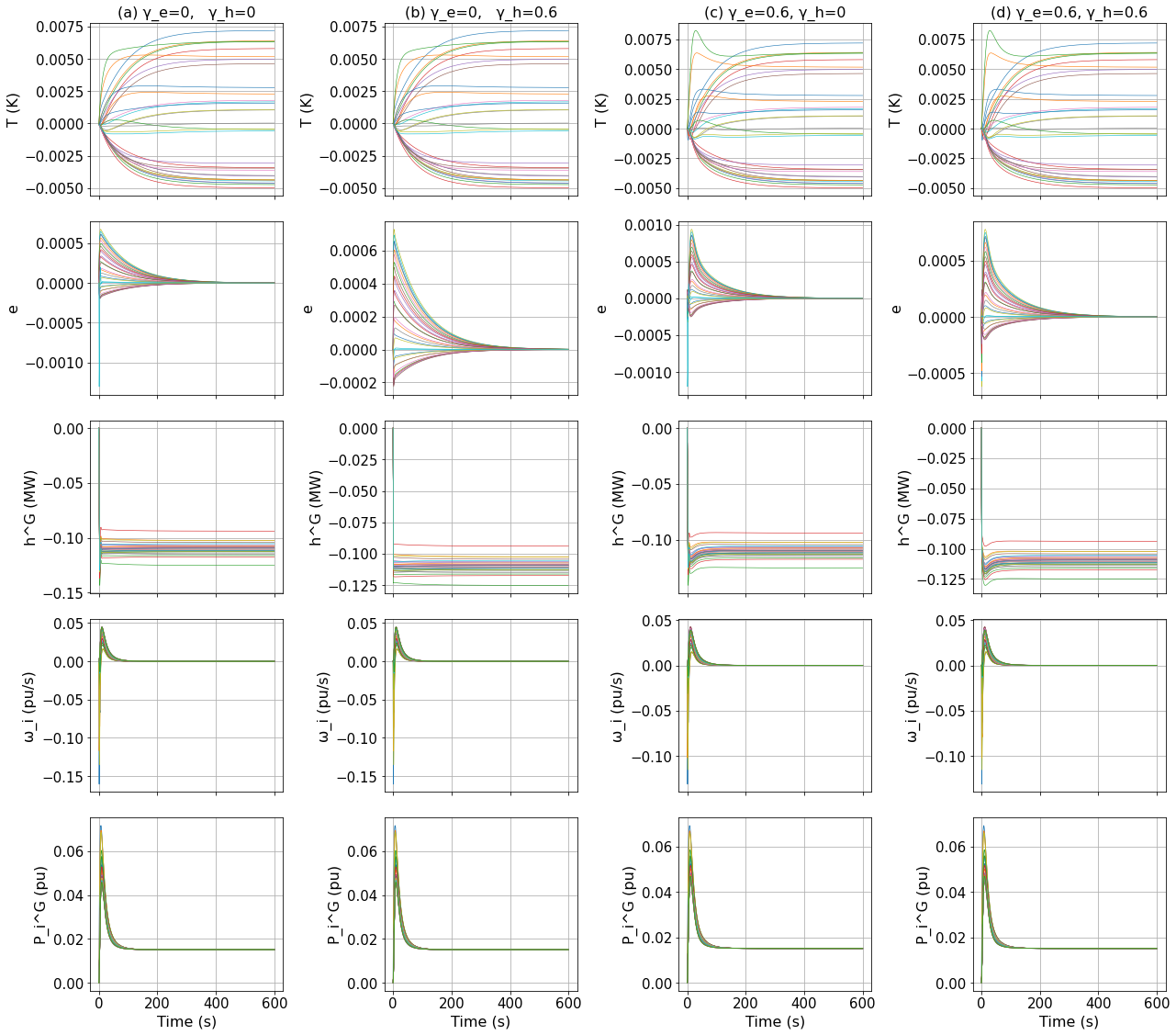}
	\vspace{-0.25cm}
	\caption{CHP system operation with different $\gamma^E$ and $\gamma^H$.}
	\label{comparechp}
\end{figure}
\vspace{-0.3cm}

\vspace{-0.3cm}
\begin{table}[htbp]
\centering
\vspace{-0.2cm}
\caption{Joint/passivity-only performance ratio at representative disturbance frequencies  
(values $<1$ indicate improvement over passive-only).}
\label{tab:freq_ratio_full}
\begin{tabular}{c|cccc}
\hline
Frequency 
& $J_{\mathrm{freq},L1}$ 
& $J_{u,L2}$ 
& peak $|h^G_k|$ 
& $J_{h^G,L2}$ \\
\hline
$0.0016$ Hz 
& $0.24$ 
& $0.11$ 
& $0.96$ 
& $1.00$ \\

$0.016$ Hz 
& $0.006$ 
& $0.021$ 
& $0.93$ 
& $0.99$ \\

$50$ Hz 
& $0.008$ 
& $0.019$ 
& $0.96$ 
& $1.00$ \\
\hline
\end{tabular}
\end{table}
\vspace{-0.5cm}

\section{conclusion}
This work presents a mixed $\mathcal{H}_\infty$--passivity framework that coordinates DHS temperature regulation with EPS secondary frequency control.
A forecast-free DHS regulator achieves the optimal energy-sharing equilibrium without heat-load prediction and provides tunable EPS–DHS performance trade-offs via LMI conditions.
The resulting controller guarantees closed-loop stability and convergence to the optimal CHP operating point, and simulations across low-, mid-, and high-frequency disturbances confirm the expected performance gains.

\bibliographystyle{IEEEtran}
\bibliography{CDC-port}
\appendix
\textbf{Proof of Lemma 1:} At equilibrium, $\dot g = 0$ in~\eqref{eq:agc} implies $\omega_r = 0$, and therefore
$\omega_i^{*} = 0$ for all $i$ under \textbf{Assumption~1}.

\textbf{Proof of Lemma 2:} Under \textbf{Assumption 2},   
classical passivity results for swing equations with strictly positive real generation dynamics imply that the overall EPS is 
input strictly passive from $-p_i^H$ to $\omega_i$. Restricting the input to the HP bus
frequencies as $\boldsymbol{y_1=\omega^{HP}}$ yields the dissipation inequality
\eqref{eq:eps_passive_ineq} for some storage function $V_{\mathrm{e}}$
and some $\rho_e>0$ depending on $D_{\min}$ and the margins of $G_i(s)$, proving strict passivity from $\boldsymbol{u_1=-p^H}$ to 
$\boldsymbol{y_1=\omega^{HP}}$.

\textbf{Proof of Lemma 4:}
Since $\boldsymbol{A}_{\rm cl}$ is Hurwitz, the closed-loop system has a unique equilibrium $\boldsymbol{x}^\ast$, and the state converges as $\boldsymbol{x}(t)\to \boldsymbol{x}^\ast$ for $t\to\infty$.  
At equilibrium, $\dot{\boldsymbol{\xi}}=\boldsymbol{e}= \boldsymbol{0}$, which implies $\lim_{t\to\infty} \boldsymbol{e}(t) = \boldsymbol{0}$.

\textbf{Proof of Lemma 5:}
\textbf{Lemma~4} guarantees $\lim_{t\to\infty}\boldsymbol e(t)=\boldsymbol 0$,  
which enforces the DHS optimality condition (\textbf{Lemma~3}),  
while \textbf{Lemma~1} ensures $\lim_{t\to\infty}\boldsymbol \omega(t)=\boldsymbol 0$ in the EPS.

\textbf{Proof of Theorem 1:}
Since the augmented DHS with temperature regulator~\eqref{errorde2}--\eqref{closeloopdhs}  
is strictly passive with respect to  
$\boldsymbol{u_2 = \omega^{HP}}$ and $\boldsymbol{y_2 = p^H}$,  
there exist a storage function $V_h$  
and a constant $\rho_h>0$ such that  
$\dot{V}_h 
\;\le\;
\boldsymbol{(p^H)^\top \omega^{HP}}
\;-\;
\rho_h\, \|\boldsymbol{p^H}\|^2$.
Summing the two storage inequalities yields  
$\dot{V} 
= \dot{V}_e + \dot{V}_h
\;\le\;
-\,\rho_h \|\boldsymbol{p^H}\|^2 -\,\rho_e \|\boldsymbol{\omega^{HP}}\|^2
\;\le\; 0$,
so $V := V_e + V_h$ is nonincreasing (with $V_e$ defined in (\ref{eq:eps_passive_ineq})).  
The largest invariant set where $\dot{V}=0$ requires $\boldsymbol{p^H=0}$ and $\boldsymbol{\omega^{HP}=0}$. 
By LaSalle’s invariance principle, this equilibrium is asymptotically stable.

\textbf{Proof of Lemma 6:}
The original DHS~(\ref{originaldhs}) is dominated by thermal storage and diffusive heat transport, which attenuate high-frequency disturbances; thus, it inherently acts as a low-pass system. For the augmented DHS~\eqref{errorde2}--\eqref{closeloopdhs}, the integral state $\boldsymbol{\dot\xi=e}$ and feedback $\boldsymbol{h^G=-K_T T-K_I\xi}$ ensure steady-state $\boldsymbol{e^\ast=0}$ under bounded disturbances whenever $\boldsymbol{A_{\rm cl}}$ is Hurwitz (see \textbf{Lemma~4}). Therefore,    $\boldsymbol{p^{H\ast}=\gamma^E\omega^{HP\ast}}$, and $\lim_{\omega\to0}\boldsymbol{G(j\omega)}=\boldsymbol{\gamma^E}$.

\textbf{KYP BMI\eqref{eq:KYP-BMI-strict}:} Let $\boldsymbol{X}:=\boldsymbol{P^{-1}}\succ0$,
applying the congruence transformation with $\boldsymbol{T_R}=\mathrm{diag}(\boldsymbol X,\boldsymbol I)$ to
\eqref{eq:KYP-BMI-strict} yields
$\boldsymbol{T_R}^\top
\begin{bmatrix}
\boldsymbol A_{\rm cl}^\top \boldsymbol P + \boldsymbol P\,\boldsymbol A_{\rm cl}
&
\boldsymbol P\,\boldsymbol B_{\rm cl}^{(w)} - \boldsymbol C_y(\boldsymbol K)^\top
\\[3pt]
*
&
-\bigl(\boldsymbol\gamma + \boldsymbol\gamma^\top\bigr) 
-\rho \boldsymbol I
\end{bmatrix}
\boldsymbol{T_R}
\ \prec\ \boldsymbol 0$. The left--upper block becomes $\boldsymbol X\big(\boldsymbol A_{\rm cl}^\top \boldsymbol P
      + \boldsymbol P\,\boldsymbol A_{\rm cl}\big)\boldsymbol X =\boldsymbol X \boldsymbol A_{\rm cl}^\top
      + \boldsymbol A_{\rm cl}\boldsymbol X
     =\operatorname{sym}(\boldsymbol A_{\rm cl}\boldsymbol X) \prec 0$.
With $\boldsymbol A_{\rm cl}=\boldsymbol A_{\rm aug}-\boldsymbol B_{\rm aug}\boldsymbol K$, $\operatorname{sym}\big((\boldsymbol A_{\rm aug}-\boldsymbol B_{\rm aug}\boldsymbol K)\boldsymbol X\big)
=
\operatorname{sym}\big(\boldsymbol A_{\rm aug}\boldsymbol X - \boldsymbol B_{\rm aug}(\boldsymbol K\boldsymbol X)\big)\prec0$.
Introduce the standard change of variables
$\boldsymbol Y := \boldsymbol K \boldsymbol X$
to obtain 
$\boldsymbol{LMI_1}=\operatorname{sym}\!\big(\boldsymbol A_{\rm aug}\boldsymbol X
- \boldsymbol B_{\rm aug}\boldsymbol Y\big) \prec 0,\boldsymbol X\succ0$. For the off--diagonal block, the congruence transformation yields
$\boldsymbol X(\boldsymbol P\,\boldsymbol B_{\rm cl}^{(w)} 
- \boldsymbol C_y(\boldsymbol K)^\top)
= \boldsymbol B_{\rm cl}^{(w)} 
- \boldsymbol X \boldsymbol S_C^\top
+ \boldsymbol X \boldsymbol K^\top \boldsymbol S_D^\top$.
With 
$\boldsymbol{LMI_2}
:= \boldsymbol B_{\rm cl}^{(w)} 
- \boldsymbol X \boldsymbol S_C^\top
+ \boldsymbol Y^\top \boldsymbol S_D^\top$,
 ~\eqref{eq:KYP-BMI-strict} is equivalently written as the LMI (\ref{eq:LMI_pass_direct}).

\textbf{BRL BMI\eqref{eq:BRL-lp}:} $\boldsymbol{P}_{\mathrm{lp}}=\mathrm{diag}(\boldsymbol{P},\,p_w\boldsymbol{I}_{n^{HP}})\succ\boldsymbol{0}$ and define
$\boldsymbol{X}:=\boldsymbol{P}^{-1}\succ\boldsymbol{0}$, 
$\;q_w:=p_w^{-1}>0$, and $\boldsymbol{X}_{\mathrm{lp}}:=\boldsymbol{P}_{\mathrm{lp}}^{-1}=\mathrm{diag}(\boldsymbol{X},\,q_w\boldsymbol{I}_{n^{HP}})$.
With $\boldsymbol{TR}_{\mathrm{lp}}=\mathrm{diag}(\boldsymbol{X}_{\mathrm{lp}},\,\boldsymbol{I},\,\boldsymbol{I})$,  \eqref{eq:BRL-lp} becomes
\small{\begin{equation}\notag
\begin{bmatrix}
\underbrace{\boldsymbol{A}_{\mathrm{lp}}(\boldsymbol{K})\,\boldsymbol{X}_{\mathrm{lp}}
+\boldsymbol{X}_{\mathrm{lp}}\boldsymbol{A}_{\mathrm{lp}}(\boldsymbol{K})^\top}_{\triangleq~\boldsymbol{\Psi}_{11}}
&
\boldsymbol{B}_{\mathrm{lp}}
&
\underbrace{\boldsymbol{X}_{\mathrm{lp}}\boldsymbol{C}_{\mathrm{lp}}(\boldsymbol{K})^\top}_{\triangleq~\boldsymbol{\Psi}_{13}}
\\
\boldsymbol{B}_{\mathrm{lp}}^\top
&
-\gamma_\infty^{2}\,\boldsymbol{I}
&
\boldsymbol{0}
\\
\underbrace{\boldsymbol{C}_{\mathrm{lp}}(\boldsymbol{K})\,\boldsymbol{X}_{\mathrm{lp}}}_{\triangleq~\boldsymbol{\Psi}_{31}}
&
\boldsymbol{0}
&
-\,\boldsymbol{I}_{n^{HP}}
\end{bmatrix}
\prec \boldsymbol{0}.
\end{equation}}
Define
$\boldsymbol{A}_{\sim}:=\boldsymbol{A}_{\mathrm{aug}}\boldsymbol{X}-\boldsymbol{B}_{\mathrm{aug}}\boldsymbol{Y},
\boldsymbol{C}_{\sim}:=\boldsymbol{S}_C\boldsymbol{X}-\boldsymbol{S}_D\boldsymbol{Y}$.
Then with
$\boldsymbol{\Psi}_{11}
\begin{bmatrix}
\boldsymbol{A}_{\sim}+\boldsymbol{A}_{\sim}^\top & \omega_h\,\boldsymbol{C}_{\sim}^\top\\
\omega_h\,\boldsymbol{C}_{\sim} & -2\,\omega_h\,q_w\,\boldsymbol{I}_{n^{HP}}
\end{bmatrix}$,
$\boldsymbol{\Psi}_{13}
=
\begin{bmatrix}
(1+\alpha)\big(\boldsymbol{X}\boldsymbol{S}_C^\top-\boldsymbol{Y}^\top\boldsymbol{S}_D^\top\big)\\
-\,q_w\,\boldsymbol{I}_{n^{HP}}
\end{bmatrix}$,
$\boldsymbol{\Psi}_{31}=\boldsymbol{\Psi}_{13}^\top$, the BRL BMI is equivalently written as the LMI (\ref{eq:LMI_lp_direct}).
Upon feasibility, recover the controller by
$\ \boldsymbol{K}=\boldsymbol{Y}\boldsymbol{X}^{-1}$.

\end{document}